\numberwithin{equation}{section}		
\numberwithin{figure}{section}			
\numberwithin{table}{section}				
\newtheorem{defi}{Definition}[section]
\newtheorem{lem}{Lemma}[section]
\newtheorem{thm}{Theorem}[section]
\newtheorem{remark}{Remark}[section]
\newtheorem{prop}{Proposition}[section]
\newcommand{\btwonorm}[1]{\Big|\Big|{#1}\Big|\Big|_2}
\newcommand{\vect}[1]{\boldsymbol{\mathbf{#1}}}
\title{A measurement decoupling based fast algorithm for super-resolving point sources with multi-cluster structure}
\author{
Ping Liu\thanks{\footnotesize 
  Department of Mathematics,
ETH Z\"{u}rich, Swizerland (ping.liu@sam.math.ethz.ch).}
 \; and Hai Zhang\thanks{\footnotesize 
 Department of Mathematics, 
  HKUST,  Clear Water Bay, Kowloon, Hong Kong, S.A.R, China (haizhang@ust.hk). Hai Zhang was partially supported by Hong Kong RGC grant GRF 16305419 and 16304621.}}
\begin{document}

\maketitle
\begin{center}
	\textbf{Abstract}
\end{center}
We consider the problem of resolving closely spaced point sources in one dimension from their Fourier data in a bounded domain. Classical subspace methods (e.g., MUSIC algorithm, Matrix Pencil method, etc.) show great superiority in resolving closely spaced sources, but their computational cost is usually heavy. This is especially the case for point sources with multi-cluster structure which requires processing large sized data matrix resulted from highly sampled measurement. To address this issue, we propose a fast algorithm termed D-MUSIC, based on a measurement decoupling strategy. We demonstrate theoretically that for point sources with a known cluster structure, their measurement can be decoupled into local measurements of each of the clusters by solving a system of linear equations that are obtained by using multipole basis. We further develop a subsampled MUSIC algorithm to detect the cluster structure and utilize it to decouple the global measurement. In the end, MUSIC algorithm was applied to each local measurement to resolve point sources therein. Compared to the standard MUSIC algorithm, the proposed algorithm has comparable super-resolving capability while having a much lower computational complexity.

\section{Introduction}
In recent years, with the rapid development of novel imaging techniques, super-resolution is drawing increasing interest in the fields of imaging, signal processing, and applied mathematics. In this paper, we consider the super-resolution problem of resolving closely spaced point sources in one dimension from their noisy Fourier data in a bounded domain. 
The goal is to develop an efficient algorithm for the case of point sources with multi-cluster structure defined below. See also Figure \ref{fig: paperissueexam1} for a typical example. 
\begin{defi}(multi-cluster structure)\label{defi:clusterstructure}\\
	Let $\Omega$ be a cutoff frequency and 
	$D$ be a constant of order one. 
	Let $\Lambda_j$, $j=1, 2, \cdots, K$ be $K$ intervals centered at $O_j$ with half-length $\frac{D_j}{\Omega}$, respectively. 
	We say that
	$\cup_{j=1}^K \Lambda_j$
	is a $(K, L, D, \Omega)$-region if 
	\[
	\max_{1\leq j\leq K} D_j\leq D, \quad 
	\min_{1\leq i<j \leq K} |O_i-O_j| \geq \frac{L}{\Omega}.
	\] 
	We call that a set of point sources represented by a discrete measure $\mu$ has a multi-cluster structure if it is supported in a $(K, L, D, \Omega)$-region with $D \ll L$. 
\end{defi}  

Throughout the paper, we consider the following discrete measure with multi-cluster structure
\begin{equation} \label{eq-mu}
	\mu =\sum_{j=1}^{K}\sum_{q=1}^{n_j}a_{q,j}\delta_{y_{q,j}},
\end{equation}
where $y_{1,j},\cdots,y_{n_j,j}$ are the locations of point sources in $\Lambda_j$ and $a_{1,j},\cdots,a_{n_j,j}$ their amplitudes.
We denote
\[
m_{\min,j}=\min_{q=1,\cdots,n_j}|a_{q,j}|,
\quad m=\|\mu\|_{TV}=\sum_{j=1}^{K}\sum_{q=1}^{n_j}|a_{q, j}|.
\] 
We assume that we have the Fourier transform of $\mu$ restricted to a bounded interval $[-\Omega, \Omega]$, where $\Omega$ is the cutoff frequency. We sample at $N$ evenly-spaced points in the Fourier space to get the following discrete measurement
\begin{equation}\label{equ:fremeasurement1}
	\vect Y (x_l) = \mathcal{F}[\mu](x_l) + \mathbf W(x_l) = \sum_{j=1}^K\vect Y_j(x_l)+\vect W(x_l), \quad x_l \in [-1, 1], \ l=1,\cdots,N,
\end{equation}
where 
$$
\vect Y_j(x_l)= \sum_{q=1}^{n_j} a_{q,j}e^{iy_{q,j}\Omega x_l}, \quad
\mathcal{F}[\mu](x):= \int_{-\infty}^{\infty} \mu(y) e^{i\Omega y x}dy,
$$
and $\mathbf W(x_t)$ is the additive noise. Throughout the paper, we call $\vect Y$ the global measurement and 
$\vect Y_j$
the local measurement generated by point sources in the cluster $\Lambda_j$. 
We write (\ref{equ:fremeasurement1}) into the following vector form
\begin{equation}\label{equ:fremeasurement2}
	\vect Y = [\mu]+\vect W = \sum_{j=1}^K \vect Y_j +\vect W
\end{equation}
where $\vect Y = (\vect Y (x_1), \cdots, \vect Y (x_N))^T, [\mu] = (\mathcal{F}[\mu](x_1), \cdots, \mathcal{F}[\mu](x_N))^T, \vect W = (\mathbf W(x_1), \cdots, \mathbf W(x_N))^T$ and $\vect Y_j = (\vect Y_j(x_1), \cdots, \vect Y_j(x_N))^T$. 
We assume 
$$\frac{1}{\sqrt{N}}||\vect W||_2\leq \sigma.$$

We are interested in the inverse problem of recovering the source locations $y_{1,1}, \cdots, y_{n_K, K}$ from the measurement $\vect Y$. We note that 
by performing an inverse Fourier transform, the continuous measurement $\vect Y(x)$ is equivalent to the following spatial domain data 
$$
\hat{\vect Y} (t) =  \sum_{j=1}^{K}\sum_{q=1}^{n_j}a_{q,j} \frac{\sin{\Omega(t-y_{q,j}) }}{\pi t}. 
$$
Then the inverse problem becomes a deconvolution problem in the spatial domain. 

The inverse problem of (\ref{equ:fremeasurement2}) is severely ill-posed when there are multiple point sources closely spaced in one cluster, say with separation distance smaller than the Rayleigh length (RL) $\frac{\pi}{\Omega}$. 
For the case of a single cluster with closed spaced point sources, we refer the reader to \cite{liu2021theorylse, liu2021mathematicaloned, liu2021mathematicalhighd} for a theory of computational resolution limit which addresses the theoretical issues of recovering source number and locations. We note that fine structures with scale smaller than RL cannot be resolved by the usual sparsity promoting optimization methods \cite{candes2014towards, candes2013super, bernstein2019deconvolution, chi2020harnessing, cai2019fast, duval2015exact, 3cd521e14db44a81a63d9d80c4aaa1ca, tang2014near, morgenshtern2020super, denoyelle2017support, morgenshtern2016super}, which typically require a separation distance of several RLs (or some conditions to regularize the problem) \cite{duval2015exact, tang2015resolution}. 
Alternatively one may consider the classical subspace methods (e.g., MUSIC \cite{schmidt1986multiple}, Matrix Pencil \cite{hua1990matrix} and ESPRIT methods \cite{roy1989esprit}), which have demonstrated super-resolving capacity \cite{batenkov2019super, li2021stable, liao2016music, li2019super}. However, 
in the case of multiple clusters, highly sampled measurement is needed and this 
results large sized data matrix which demands high computational cost since these subspace algorithms rely on singular value decomposition.

\begin{figure}
	\centering
	\includegraphics[width=6in, height = 3.5in]{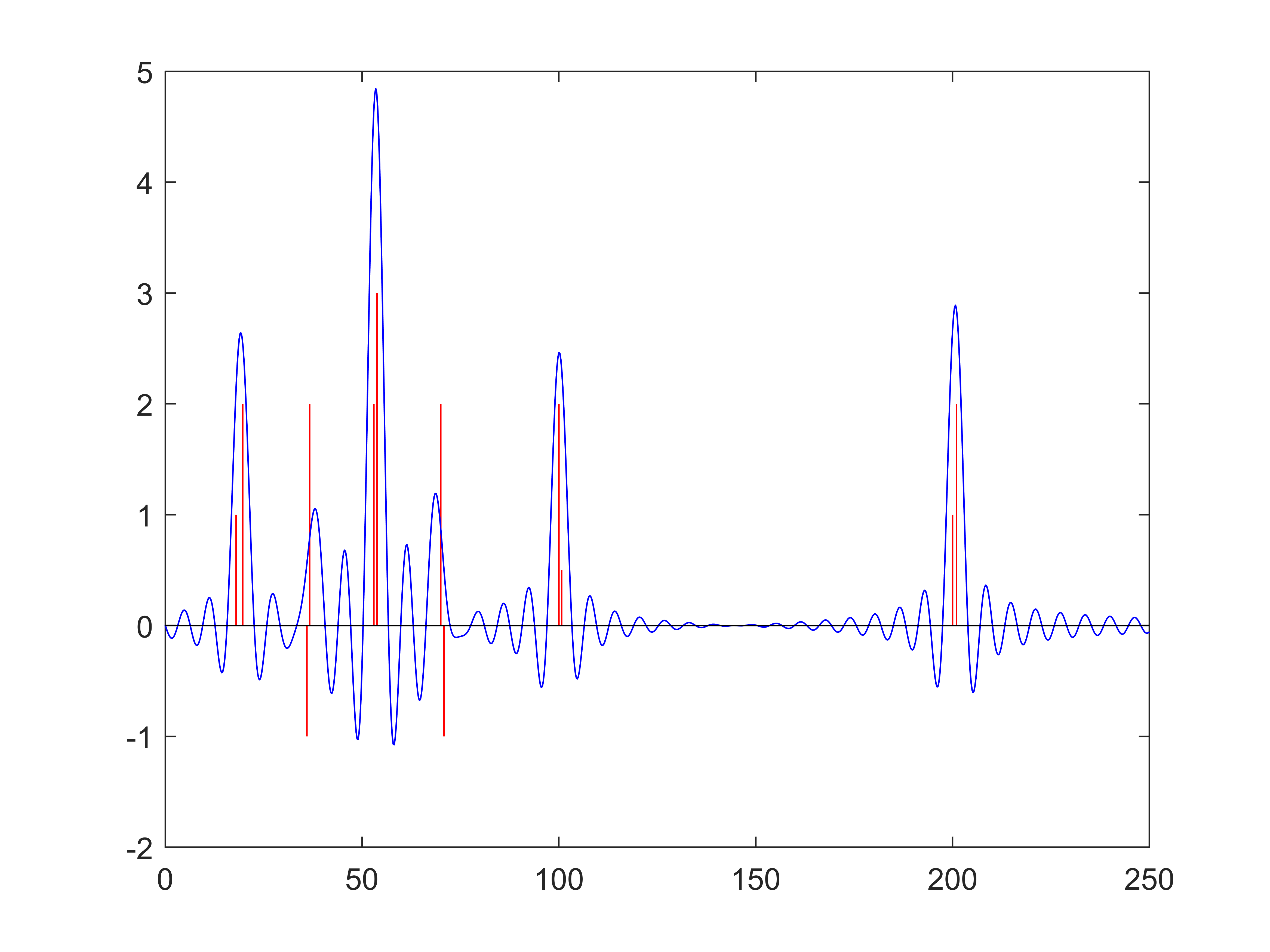}
	\caption{The blue curve represents spatial domain measurement; the red lines represent the to-be recovered point sources which are distributed in multiple clusters.}
	\label{fig: paperissueexam1}
\end{figure}

To remedy these issues, we develop an efficient algorithm based on a measurement decoupling strategy. The main idea is that for a given measurement $\vect Y$ that is generated by point sources as in (\ref{eq-mu}) with a known cluster structure, one can first recover local measurement $\vect Y_j$ for each $j$ and then 
resolve sources in each cluster from their local measurement. We prove that when the clusters are well-separated, global measurement can be decomposed into local measurements by solving a system of linear equations that are obtained by using multipole basis. 
We further develop a subsampled MUSIC algorithm to detect the cluster structure and utilize the result to decouple global measurement. A MUSIC algorithm was then applied to resolve point sources in each cluster from local measurement. It is demonstrated that the algorithm can super-resolve point sources when the clusters are well separated. Moreover, it has much lower computational complexity compared to standard MUSIC.

We notice that the idea of measurement decoupling was also exploited in \cite{wang2008tree}. Therein, the authors developed an algorithm for the two-dimensional DOA problem in array processing. They utilized a projection strategy to decouple the measurements of different groups in one dimension. Our measurement decoupling strategy is different. We use subsampled MUSIC algorithm to detect the cluster structure and then perform multipole expansion around each of the clustered centers. The local measurements are reconstructed using multipole basis.

On the other hand, if we assume that point sources are located on a grid, then the multi-cluster structure considered in this paper is related to block-sparse signals, see for instance \cite{stojnic2009reconstruction, eldar2009robust, eldar2009block, eldar2010block}. In \cite{stojnic2009reconstruction, eldar2010block}, the authors considered recovering $\vect X$ from $\vect Y=\vect D \vect X$, where  
\[
\vect X= (\underbrace{x_1, \cdots, x_d}_{\vect X^T[1]}, \underbrace{x_{d+1}, \cdots, x_{2d}}_{\vect X^T[2]}, \cdots, \underbrace{x_{N-d+1}, \cdots, x_{N}}_{\vect X^T[M]})^T
\]
is a concatenation of M blocks of vectors of length $D$, and 
\[
\vect D = (\underbrace{\vect D_1, \cdots, \vect D_d}_{\vect D[1]}, \underbrace{\vect D_{d+1}, \cdots, \vect D_{2d}}_{\vect D[2]}, \cdots, \underbrace{\vect D_{N-d+1}, \cdots, \vect D_{N}}_{\vect D[M]})
\]
is a concatenation of M blocks of matrice of size $L\times d$. 
The authors proposed the following relaxation scheme to reconstruct $\vect X$ \cite{stojnic2009reconstruction}:
\begin{equation}\label{equ:mixl2l1recovery}
	\min\sum_{l=1}^M ||\vect X[l]||_2, \ \text{subject to $\vect D\vect X=\vect Y$}.
\end{equation}
They demonstrated that when the block matrices $\vect D[j]$'s are Gaussian, (\ref{equ:mixl2l1recovery}) can find the sparest solution $\vect X$ with overwhelming probability as $N\rightarrow \infty$ under certain conditions. In \cite{eldar2009block}, based on block-coherence measure, 
the authors showed that any block $k$-sparse vector can be recovered
if the block-coherence satisfies certain condition. However, this condition does not hold for the measurement matrix when resolving closely spaced point sources as is considered in this paper.

The rest of the paper is organized in the following way. In Section 2, we introduce the theory and strategy for measurement decoupling. In Section 3, we develop a subsampled MUSIC algorithm to detect cluster structures. In Section 4, we develop the measurement decoupling based algorithm, D-MUSIC, for resolving point sources with multi-cluster structure. We also conduct numerical experiments to demonstrate its efficiency. Finally, in Section 5, we outline some future works.

\section{The theory of measurement decoupling by using multipole basis}\label{section:measuredecouptheory}
In this section, we develop the theory of measurement decoupling using multipole basis. 
The aim is to decouple local measurements $\vect Y_j$'s from the global measurement $\vect Y$ in (\ref{equ:fremeasurement1}) for point sources (\ref{eq-mu}) with a known multi-cluster structure. The main idea is to first represent each of the local measurement $\vect Y_j$ using a proper set of multipole basis, and then decouple them by solving a system of linear equations.
We show that the strategy works when the clusters are well-separated. In section \ref{sec:multipole}, we develop the theory using multipole basis in a straightforward manner. In section \ref{sec:multipole2}, we improve the decoupling strategy by using a modulation technique. 

\subsection{Decoupling using multipole basis-a precursor} \label{sec:multipole}
We start with the following multipole expansion for the local measurement $\vect Y_j$,
\begin{equation}\label{equ:multipoleexpanderive}
	\begin{aligned}
		\vect Y_j(x) = &\sum_{q=1}^{n_j} a_{q,j}e^{i\Omega y_{q,j}x} = \sum_{q=1}^{n_j}a_{q,j}e^{i\Omega O_jx}e^{i\Omega (y_{q,j}-O_j)x}\\
		=& \sum_{q=1}^{n_j}a_{q,j} \sum_{r=0}^{\infty}e^{i\Omega O_jx}\frac{(i\Omega (y_{q,j}-O_j)x)^r}{r!}\\
		= & \sum_{r=0}^{\infty}\sum_{q=1}^{n_j}a_{q,j}(\Omega(y_{q,j}-O_j))^r e^{i \Omega O_j x}\frac{(ix)^r}{r!}. 
	\end{aligned}
\end{equation}

We call the function $h_{r,O_j}(x)=\sqrt{2r+1}e^{i \Omega O_j x}(ix)^r$ the $r$-th order multipole function centered at $O_j$. 
Let $x_1, x_2, \cdots, x_N$ be $N$ equally spaced sample point in $[-1, 1]$. We define its discretized version by 
\begin{equation}\label{equ:multipolevector}
	\vect h_{r,O_j} = (h_{r,O_j}(x_1), \cdots, h_{r,O_j}(x_N))^T, 
\end{equation}
and call it the $r$-th order multipole basis vector centered at $O_j$. 
We have 
\begin{equation}\label{equ:normoffremultipole}
	\frac{1}{\sqrt{N}}||\vect h_{r, O_j}||_2\approx \sqrt{\frac{1}{2}\int_{-1}^1(2r+1)x^{2r}dx}\leq  1, \quad 1\leq j\leq K, \ r=0,1,\cdots.
\end{equation}
We define
\begin{equation}\label{equ:multipolecoeffi}
	Q_{r, O_j}(\mu) = \frac{\sum_{q=1}^{n_j}a_{q,j}(\Omega(y_{q,j}-O_j))^r}{\sqrt{2r+1} r!}
\end{equation}
and call it multipole coefficient. Using multiple basis vectors, we have the following representation for the gobal measurement $\vect Y$:
\begin{align}\label{equ:multipoleexpanequ1}
	\vect Y=\sum_{j=1}^{K} \sum_{r=0}^{\infty} Q_{r, O_j}\vect h_{r, O_j}+\vect W.
\end{align}

Observing that higher order multipole basis vectors decay exponentially fast as the order increases, we thus can approximate $\vect Y$ by the first $s$ multipole basis vectors in each of the clusters, where $s$ is to be determined. For the purpose, denote 
\begin{equation}\label{equ:multipolematrixsingle}
	\vect H[j]=\Big(\mathbf h_{0,O_j},\cdots,\mathbf h_{s-1,O_j}\Big)
\end{equation}
by the multipole matrix associated with the cluster centered at $O_j$, and 
$$
\vect \theta_j = (Q_{0,O_j},\cdots,  Q_{s-1,O_j})
$$
by the vector of multipole coefficients. We have  
\begin{equation}\label{equ:multipoleexpanequ2}
	\vect Y = \sum_{j=1}^K\vect H[j] \vect \theta_j+\vect W +\vect {Res}, 
\end{equation}
where $\mathbf {Res}$ is the residual term. Now, we determine $s$. For a given $\sigma, m$, and $D$, we choose 
\begin{equation}\label{equ:polesrecovered2}
	s:=\min\Big\{l\in \mathbb N: \frac{D^l(l+1)}{l!\sqrt{2l+1}(l+1-D)}\leq \frac{\sigma}{m}, \ l\geq D \Big\}.
\end{equation}
Then 
\begin{equation}\label{equ:multipoleresbound1}
	\begin{aligned}
		&\frac{1}{\sqrt{N}}||\mathbf {Res}||_2=\frac{1}{\sqrt{N}}\btwonorm{\sum_{r=s}^{+\infty}\sum_{j=1}^{K}\sum_{q=1}^{n_j}\frac{a_{q,j}(\Omega (y_{q,j}-O_j))^r}{r!\sqrt{2r+1}}\mathbf h_{r, O_j}}\\
		\leq&\sum_{r=s}^{+\infty}\sum_{j=1}^{K}\sum_{q=1}^{n_j}\frac{|a_{q,j}||\Omega (y_{q,j}-O_j)|^r}{r!\sqrt{2r+1}}\frac{1}{\sqrt{N}}||\mathbf h_{r, O_j}||_2
		\lesssim  m\sum_{r=s}^{+\infty}\frac{D^r}{r!\sqrt{2r+1}} \qquad \Big(\text{by (\ref{equ:normoffremultipole})}\Big)\\
		=& m\frac{D^s}{s!\sqrt{2s+1}}\Big(1+\frac{D}{s+1}+\frac{D^2}{(s+1)(s+2)}+\cdots\Big) \\
		<& m\frac{D^s}{s!\sqrt{2s+1}}\frac{1}{1-\frac{D}{s+1}}= m\frac{D^s(s+1)}{s!\sqrt{2s+1}(s+1-D)}\leq\sigma. \qquad \qquad \Big(\text{by (\ref{equ:polesrecovered2})}\Big)
	\end{aligned}
\end{equation}
Therefore, $\vect {Res}$ is of noise level with $s$ chosen above. 




We next present a result on the correlation between vectors spanned by mulitipole basis vectors associated with different clusters. 
\begin{lem}\label{lem:multipolemxtrixorthogonality1}
	Let $\vect H[j], \vect H[p]$ be defined as in  (\ref{equ:multipolematrixsingle}). Assume that $|O_j-O_p | \geq \frac{2(2s-2)^2}{\Omega}$, then
	\[
	\lim_{N\rightarrow \infty} \Big|\Big \langle \frac{1}{\sqrt{N}} \vect H[j]\vect a_j, \frac{1}{\sqrt{N}} \vect H[p] \vect a_p \Big\rangle\Big| < \frac{3.2(2s-1)^2}{|\Omega(O_j-O_p)|} \lim_{N\rightarrow \infty} \Big|\Big| \frac{1}{\sqrt{N}} \vect H[j]\vect a_j\Big|\Big|_2  \Big|\Big|\frac{1}{\sqrt{N}}\vect H[p] \vect a_p \Big|\Big|_{2}.  
	\]
\end{lem}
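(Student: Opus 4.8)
The plan is to reduce the discrete inner product and norms to continuous $L^2$ integrals over $[-1,1]$, recognize the essential objects as low-degree polynomials, and then read off the decay in $|\Omega(O_j-O_p)|$ from a single integration by parts, keeping the numerator polynomial in $s$ by working intrinsically rather than through coefficients.

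First I would let $N\to\infty$. Since $x_1,\dots,x_N$ are equally spaced in $[-1,1]$, Riemann-sum convergence gives $\frac1N\sum_{l=1}^N F(x_l)\to \frac12\int_{-1}^1 F(x)\,dx$ for continuous $F$. The vector $\vect H[j]\vect a_j$ samples $\sum_{r=0}^{s-1}(\vect a_j)_r\,h_{r,O_j}(x)=e^{i\Omega O_j x}g_j(x)$, where $g_j(x):=\sum_{r=0}^{s-1}(\vect a_j)_r\sqrt{2r+1}\,(ix)^r$ is a polynomial of degree at most $s-1$; the truncation at order $s$ is built into $\vect H[j]$, so no higher-order multipoles enter. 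Writing $\omega:=\Omega(O_j-O_p)$ and $\lVert g\rVert^2:=\frac12\int_{-1}^1|g(x)|^2\,dx$, the three quantities in the statement converge to $\frac12\int_{-1}^1 e^{i\omega x}g_j(x)\overline{g_p(x)}\,dx$, to $\lVert g_j\rVert$, and to $\lVert g_p\rVert$, respectively (the phases $e^{i\Omega O_j x}$ cancel in each norm, and combine to $e^{i\omega x}$ in the inner product). Hence the lemma is equivalent to the polynomial oscillatory-integral estimate
\[
\Big|\tfrac12\int_{-1}^1 e^{i\omega x}g_j(x)\overline{g_p(x)}\,dx\Big|<\frac{3.2(2s-1)^2}{|\omega|}\,\lVert g_j\rVert\,\lVert g_p\rVert
\qquad\text{when } |\omega|\ge 2(2s-2)^2 .
\]

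Next, set $P:=g_j\overline{g_p}$, a polynomial of degree at most $2s-2$, and integrate by parts once:
\[
\tfrac12\int_{-1}^1 e^{i\omega x}P(x)\,dx=\frac{1}{2i\omega}\Big[e^{i\omega x}P(x)\Big]_{-1}^{1}-\frac{1}{2i\omega}\int_{-1}^1 e^{i\omega x}P'(x)\,dx .
\]
This already produces the factor $1/|\omega|$, leaving $|\tfrac12\int e^{i\omega x}P|\le \frac{1}{2|\omega|}(|P(1)|+|P(-1)|)+\frac{1}{2|\omega|}\int_{-1}^1|P'(x)|\,dx$, and it remains to bound the two numerators by a polynomial in $s$ times $\lVert g_j\rVert\lVert g_p\rVert$. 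For the boundary term I would use the reproducing kernel $K$ of the degree-$(s-1)$ polynomials for the weight $\tfrac12$ on $[-1,1]$: in the Legendre basis $K(\pm1,\pm1)=\sum_{k=0}^{s-1}(2k+1)=s^2$, so pointwise evaluation satisfies $|g(\pm1)|\le s\,\lVert g\rVert$ and therefore $|P(\pm1)|=|g_j(\pm1)|\,|g_p(\pm1)|\le s^2\lVert g_j\rVert\lVert g_p\rVert$. For the derivative term I would expand $P'=g_j'\overline{g_p}+g_j\overline{g_p'}$, apply Cauchy--Schwarz to get $\int_{-1}^1|P'|\le 2(\lVert g_j'\rVert\lVert g_p\rVert+\lVert g_j\rVert\lVert g_p'\rVert)$, and invoke the $L^2[-1,1]$ Markov inequality $\lVert g'\rVert\le c_s\lVert g\rVert$, whose sharp constant for degree $\le s-1$ is $c_s=O(s^2)$. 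Collecting terms yields a bound of the form $\frac{s^2+2c_s}{|\omega|}\lVert g_j\rVert\lVert g_p\rVert$, and since $s^2+2c_s=O(s^2)$ this sits comfortably below $\frac{3.2(2s-1)^2}{|\omega|}\lVert g_j\rVert\lVert g_p\rVert$.

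The main obstacle is the intrinsic, coefficient-free nature of the estimate. A naive expansion of $g_j,g_p$ in the monomials $x^r$ reduces everything to $\int_{-1}^1 e^{i\omega x}x^k\,dx=O(1/|\omega|)$, but it only controls the answer by the $\ell^1$ norms of the coefficients, which are exponentially worse than $\lVert g_j\rVert,\lVert g_p\rVert$ because of the ill-conditioning of the monomial basis; the decay $1/|\omega|$ would then be accompanied by an unacceptable exponential-in-$s$ constant. The resolution is to measure everything in the $L^2$ norm itself --- the reproducing kernel for the endpoint values and the $L^2$ Markov constant for the derivative --- which is exactly what keeps the numerator polynomial in $s$. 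The remaining work is bookkeeping: a careful justification of the $N\to\infty$ limit, and tracking the absolute constants so as to land under the explicit $3.2(2s-1)^2$. Here the hypothesis $|\omega|\ge 2(2s-2)^2$ is what guarantees that the single power $1/|\omega|$ genuinely dominates; in the equivalent spherical-Bessel picture, where $\int_{-1}^1 e^{i\omega x}P_m(x)\,dx=2i^m j_m(\omega)$ for $m\le 2s-2$, it places $|\omega|$ beyond the turning point of every relevant $j_m$, i.e.\ in the regime $|j_m(\omega)|\lesssim 1/|\omega|$, which is precisely where the stated strict inequality holds.
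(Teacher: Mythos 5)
Your proposal is correct, but it reaches the bound by a genuinely different route than the paper. Both arguments begin identically: pass to the $N\to\infty$ Riemann-sum limit so that the claim becomes an oscillatory-integral estimate for $\int_{-1}^1 e^{i\omega x}g_j(x)\overline{g_p(x)}\,dx$ with $\omega=\Omega(O_j-O_p)$ and $\psi=g_j\overline{g_p}$ a polynomial of degree at most $2s-2$. From there the paper integrates by parts \emph{repeatedly} (its Lemma \ref{lem:oscillatoryintegral}), bounding all higher boundary terms via the Markov brothers' inequality and summing the resulting series --- this is exactly where the hypothesis $|\omega|\ge 2(2s-2)^2$ is consumed --- to get $3.2\|\psi\|_{L^\infty}/|\omega|$; it then converts $L^\infty$ to $L^1$ by a Nikolskii-type inequality (Lemma \ref{lem:polynomialineq2}, giving the factor $(2s-1)^2$) and finishes with Cauchy--Schwarz. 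You instead integrate by parts \emph{once}, control the boundary values by the Christoffel/reproducing-kernel bound $|g(\pm1)|\le s\|g\|$, and control $\int|\psi'|$ by the $L^2$ Markov inequality, staying in $L^2$ throughout. Your route buys a sharper constant (roughly $2.4s^2/|\omega|$ versus the paper's $12.8s^2/|\omega|$) and, notably, does not actually need the separation hypothesis at all --- your closing remark that the hypothesis is ``what guarantees the single power $1/|\omega|$ dominates'' is not accurate for your own argument, which is valid for every $\omega\neq 0$; the hypothesis is only essential to the paper's series-summation step. The constant-tracking you defer is genuinely routine: even a crude Cauchy--Schwarz estimate in the Legendre basis gives an $L^2$ Markov constant below $s^2/\sqrt{2}$ for degree $\le s-1$, so $s^2+2c_s<2.5s^2<3.2(2s-1)^2$ with room to spare. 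The one structural advantage of the paper's heavier machinery is that the second estimate in its Lemma \ref{lem:oscillatoryintegral} (the $1/|\omega|^3$ decay when $\psi$ vanishes to second order at $\pm1$) is reused verbatim for the modulated version, Lemma \ref{lem:multipolemxtrixorthogonality2}; your single-integration-by-parts device would have to be iterated three times there, with endpoint and derivative bounds at each stage.
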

\begin{proof} Note that
	\begin{align*}
		&\lim_{N\rightarrow \infty} \Big|\Big\langle \frac{1}{\sqrt{N}} \vect H[j]\vect a_j, \frac{1}{\sqrt{N}} \vect H[p] \vect a_p \Big \rangle\Big| \\
		=& \Big|\int_{-1}^1 e^{i\Omega (O_j-O_p)x} \sum_{q_1, q_2=0}^{s-1} a_{q_1,j} \bar a_{q_2, p} i^{q_1}(-i)^{q_2} \sqrt{2q_1+1}\sqrt{2q_2+1} x^{q_1+q_2}dx\Big|.
	\end{align*}	
	Let $\psi(x) = \sum_{q_1, q_2=0}^{s-1} a_{q_1,j} \bar a_{q_2, p} i^{q_1}(-i)^{q_2} \sqrt{2q_1+1}\sqrt{2q_2+1} x^{q_1+q_2}$. By Lemmas \ref{lem:oscillatoryintegral} and \ref{lem:polynomialineq2} we have 
	\begin{align*}
		\Big|\int_{-1}^{1}e^{i \Omega (O_j-O_p)x} \psi(x)dx\Big| < \frac{3.2||\psi||_{L_{\infty}([-1,1])}}{|\Omega (O_j-O_p)|}
		\leq  \frac{3.2(2s-1)^2||\psi||_{L_{1}([-1,1])}}{|\Omega (O_j-O_p)|}.
	\end{align*}
	Finally, by Cauchy–Schwarz inequality we have 
	\begin{align*}
		&\Big|\Big|\psi\Big|\Big|_{L_{1}([-1,1])}\leq \sqrt{\int_{-1}^1\Big|\sum_{q_1=0}^{s-1}a_{q_1,j}\sqrt{2q_1+1} x^{q_1}\Big|^2dx} \sqrt{\int_{-1}^1\Big|\sum_{q_2=0}^{s-1}a_{q_2,j}\sqrt{2q_2+1} x^{q_2}\Big|^2dx}\\
		= &\lim_{N\rightarrow \infty}\Big|\Big| \frac{1}{\sqrt{N}} \vect H[j]\vect a_j\Big|\Big|_2  \Big|\Big|\frac{1}{\sqrt{N}}\vect H[p] \vect a_p \Big|\Big|_{2}.
	\end{align*}
	This completes the proof.
\end{proof}

\begin{remark}
	Lemmas \ref{lem:multipolemxtrixorthogonality1} demonstrates that when the clusters are well separated, the spaces spanned by multipole basis vectors associated with different clusters are nearly orthogonal to each other. In comparison, we note that in a recent paper \cite{batenkov2021spectral}, the authors obtained 
	similar result for the column vectors of Vandermonde matrix. 
	
	
\end{remark}

We are ready to present a first result on the decoupling theory. 
\begin{prop}\label{thm:measurementdecouple1}
	Suppose $N$ is large enough and that the point sources in (\ref{eq-mu}) is supported in a $(K, {L}, {D}, \Omega)$-region. Suppose 
	\begin{equation}\label{equ:clustersepacondition1}
		L\geq 12.8(2s-1)^2(\ln(\frac{K}{2})+1)   
	\end{equation}
	with $s$ being defined by (\ref{equ:polesrecovered2}). Let
	\begin{equation}\label{equ:meadecoupleastsquare1}
		(\vect a_1, \cdots, \vect a_K)=\text{argmin}_{\vect \theta_j} \frac{1}{\sqrt{N}}\Big|\Big|\sum_{j=1}^K\vect H[j] \vect \theta_j-\vect Y\Big|\Big|_2,
	\end{equation}
	then we have 
	\[
	\frac{1}{\sqrt{N}}\Big|\Big|\vect H[j] \vect a_j -\vect Y_j\Big|\Big|_2\lesssim \sigma, \quad 1\leq j \leq K,
	\]
	where $\vect Y_j$'s are the local measurements in (\ref{equ:fremeasurement2}). 
\end{prop}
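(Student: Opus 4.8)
The plan is to reduce the claim to the near-orthogonality estimate of Lemma~\ref{lem:multipolemxtrixorthogonality1} via a Gram-matrix diagonal-dominance argument. First I would introduce the true (truncated) multipole coefficient vectors $\vect \theta_j^* = (Q_{0,O_j},\cdots,Q_{s-1,O_j})$, so that by the expansion (\ref{equ:multipoleexpanequ2}) together with the residual bound (\ref{equ:multipoleresbound1}) we have $\vect Y = \sum_{j=1}^K \vect H[j]\vect\theta_j^* + \vect W + \vect{Res}$ with $\frac{1}{\sqrt N}\|\vect W\|_2 \le \sigma$ and $\frac{1}{\sqrt N}\|\vect{Res}\|_2 \le \sigma$. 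Since $(\vect a_1,\cdots,\vect a_K)$ is the least-squares minimizer in (\ref{equ:meadecoupleastsquare1}), testing the objective against the feasible point $(\vect\theta_1^*,\cdots,\vect\theta_K^*)$ gives $\frac{1}{\sqrt N}\|\sum_j \vect H[j]\vect a_j - \vect Y\|_2 \le \frac{1}{\sqrt N}\|\vect W + \vect{Res}\|_2 \le 2\sigma$. Setting $\vect e_j := \vect a_j - \vect\theta_j^*$ and subtracting the two representations of $\vect Y$, the triangle inequality then yields $\frac{1}{\sqrt N}\|\sum_{j=1}^K \vect H[j]\vect e_j\|_2 \lesssim \sigma$.

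The heart of the argument is to decouple this bound on the norm of the sum into a bound on each summand. Writing $\vect v_j := \frac{1}{\sqrt N}\vect H[j]\vect e_j$ and $c_j := \|\vect v_j\|_2$, I would expand $\|\sum_j \vect v_j\|_2^2 = \sum_j c_j^2 + \sum_{j\ne p}\langle \vect v_j,\vect v_p\rangle$ and control the cross terms using Lemma~\ref{lem:multipolemxtrixorthogonality1}. In the limit $N\to\infty$ the hypothesis of that lemma is met for every pair because (\ref{equ:clustersepacondition1}) forces $L \ge 2(2s-2)^2$, hence $|O_j-O_p|\ge \frac{L}{\Omega}\ge \frac{2(2s-2)^2}{\Omega}$, so that $|\langle \vect v_j,\vect v_p\rangle| \le \rho_{jp}\,c_jc_p$ with $\rho_{jp} = \frac{3.2(2s-1)^2}{|\Omega(O_j-O_p)|}$. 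Using $2c_jc_p \le c_j^2 + c_p^2$ and the symmetry $\rho_{jp}=\rho_{pj}$, the off-diagonal mass is bounded by $\sum_{j\ne p}\rho_{jp}c_jc_p \le (\max_j \sum_{p\ne j}\rho_{jp})\sum_j c_j^2$.

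The key quantitative step is then to verify the row-sum bound $\max_j\sum_{p\ne j}\rho_{jp}\le \frac12$. Since the ordered cluster centers are spaced by at least $\frac{L}{\Omega}$, one has $|\Omega(O_j-O_p)|\ge |j-p|\,L$, so $\sum_{p\ne j}\rho_{jp}\le \frac{3.2(2s-1)^2}{L}\sum_{p\ne j}\frac{1}{|j-p|}$, and the harmonic-sum estimate $\sum_{p\ne j}\frac{1}{|j-p|}\le 2(\ln(\frac K2)+1)$ together with the separation hypothesis (\ref{equ:clustersepacondition1}) gives exactly $\sum_{p\ne j}\rho_{jp}\le \frac{6.4(2s-1)^2(\ln(K/2)+1)}{L}\le\frac12$. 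This makes the Gram matrix diagonally dominant, yielding $\|\sum_j\vect v_j\|_2^2 \ge \frac12\sum_j c_j^2$, hence $\sum_j c_j^2 \le 2\|\sum_j\vect v_j\|_2^2 \lesssim \sigma^2$ and in particular $c_j = \frac{1}{\sqrt N}\|\vect H[j]\vect e_j\|_2\lesssim\sigma$ for every $j$.

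Finally I would convert the estimate on $\vect H[j]\vect e_j$ into one on $\vect H[j]\vect a_j - \vect Y_j$. Since $\vect H[j]\vect\theta_j^* = \vect Y_j - \vect{Res}_j$, where $\vect{Res}_j$ is the single-cluster truncation residual controlled as in (\ref{equ:multipoleresbound1}), we have $\vect H[j]\vect a_j - \vect Y_j = \vect H[j]\vect e_j - \vect{Res}_j$, and one more triangle inequality gives $\frac{1}{\sqrt N}\|\vect H[j]\vect a_j - \vect Y_j\|_2 \lesssim \sigma$, as claimed. I expect the main obstacle to be precisely the decoupling step: controlling the off-diagonal Gram entries uniformly across all $K$ clusters, which is where both the near-orthogonality of Lemma~\ref{lem:multipolemxtrixorthogonality1} and the exact form of the separation condition (\ref{equ:clustersepacondition1}) (through the harmonic sum) are essential. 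The passage to the $N\to\infty$ limit must also be handled carefully so that the discrete quantities $\frac{1}{\sqrt N}\|\cdot\|_2$ can be legitimately replaced by their $L^2$ limits to which the lemma applies.
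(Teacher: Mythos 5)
Your proposal is correct and follows essentially the same route as the paper's own proof: truncated multipole expansion of each $\vect Y_j$, optimality of the least-squares minimizer to bound $\frac{1}{\sqrt N}\|\sum_j\vect H[j](\vect a_j-\vect b_j)\|_2$, and then the Gram-matrix diagonal-dominance argument via Lemma~\ref{lem:multipolemxtrixorthogonality1}, the AM--GM step, and the harmonic-sum bound $\sum_{p\neq j}|p-j|^{-1}<2(\ln\frac K2+1)$ combined with (\ref{equ:clustersepacondition1}). The only (welcome) additions are your explicit check that the separation hypothesis of Lemma~\ref{lem:multipolemxtrixorthogonality1} is implied by (\ref{equ:clustersepacondition1}) and your remark on passing to the $N\to\infty$ limit, both of which the paper leaves implicit.
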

\begin{proof} Note that
	\begin{equation}\label{equ:globalmeaappro}
		\frac{1}{\sqrt{N}}\Big|\Big|\sum_{j=1}^K\vect Y_j-\vect Y\Big|\Big|_2\leq \sigma.
	\end{equation}
	For each local measurement $\vect Y_j$, similar to (\ref{equ:multipoleexpanequ2}), we have 
	\begin{equation}\label{equ:localmeasuremultipolexpan1}
		\vect Y_j = \vect H[j] \vect b_j +\vect {Res}_j,
	\end{equation}
	where $\vect b_j = (Q_{0,O_j}, \cdots,  Q_{s-1,O_j})^T$ with $Q_{r,O_j}=\sum_{q=1}^{n_j}\frac{a_{q,j}(\Omega (y_{q,j}-O_j))^r}{r!\sqrt{2r+1}}$, and $\vect {Res}_j$ is the residual term. Similar to (\ref{equ:multipoleresbound1}), we have $\frac{1}{\sqrt{N}}||\sum_{j=1}^K\vect {Res}_j||_2\lesssim \sigma$. Together with (\ref{equ:globalmeaappro}), it follows that \\ $\frac{1}{\sqrt{N}}\Big|\Big|\sum_{j=1}^K\vect H[j] \vect b_j-\vect Y\Big|\Big|_2 \lesssim  \sigma$. On the other hand, it is clear that $\frac{1}{\sqrt{N}}\Big|\Big|\sum_{j=1}^K\vect H[j] \vect a_j-\vect Y\Big|\Big|_2 \lesssim \sigma$. Therefore
	\begin{equation}\label{equ:measuredecouple1equ1}
		\frac{1}{\sqrt{N}}\Big|\Big|\sum_{j=1}^K\vect H[j] (\vect a_j- \vect b_j) \Big|\Big|_2  \lesssim  \sigma.
	\end{equation}
	We next estimate the approximation of the local measurements. Consider $\frac{1}{\sqrt{N}}\Big|\Big|\vect H[j] (\vect a_j- \vect b_j) \Big|\Big|_2, 1\leq j\leq K$ at first. Let $\vect v_j = \frac{1}{\sqrt{N}}\vect H[j] (\vect a_j- \vect b_j)$, for large enough $N$, we have
	\begin{equation}\label{equ:measuredecouple1equ2}
		\begin{aligned}
			&\frac{1}{N}\Big|\Big|\sum_{j=1}^K\vect H[j] (\vect a_j- \vect b_j) \Big|\Big|_2^2 = \Big|\Big|\sum_{j=1}^K\vect v_j \Big|\Big|_2^2 = \sum_{j=1}^K\Big|\Big|\vect v_j \Big|\Big|_2^2+\sum_{j=1}^K\sum_{p\neq j} \Big\langle \vect v_j, \vect v_p\Big \rangle \\
			\geq & \sum_{j=1}^K\Big|\Big|\vect v_j \Big|\Big|_2^2 - \sum_{j=1}^K\sum_{p\neq j}  \frac{3.2(2s-1)^2}{|\Omega(O_j-O_p)|} \Big|\Big|\vect v_j \Big|\Big|_2 \Big|\Big|\vect v_p \Big|\Big|_2 \quad \Big(\text{by Lemma \ref{lem:multipolemxtrixorthogonality1}}\Big)\\
			\geq& \sum_{j=1}^K\Big|\Big|\vect v_j \Big|\Big|_2^2 - \sum_{j=1}^K\sum_{p\neq j}  \frac{1.6(2s-1)^2}{|\Omega(O_j-O_p)|}\Big(\Big|\Big|\vect v_j \Big|\Big|_2^2+ \Big|\Big|\vect v_p \Big|\Big|_2^2\Big)\\
			\geq & \sum_{j=1}^K\Big(\Big|\Big|\vect v_j \Big|\Big|_2^2 - \sum_{p\neq j} \frac{3.2(2s-1)^2}{|\Omega(O_j-O_p)|}\Big|\Big|\vect v_j \Big|\Big|_2^2\Big)\\
			\geq & \frac{1}{2}\sum_{j=1}^K\Big|\Big|\vect v_j \Big|\Big|_2^2 = \sum_{j=1}^K\frac{1}{2N}\Big|\Big|\vect H[j] (\vect a_j- \vect b_j) \Big|\Big|_2^2,
		\end{aligned}
	\end{equation}
	where the last inequality is derived from the assumption (\ref{equ:clustersepacondition1}) and the inequality $\sum_{p=1, p\neq j}^K\frac{1}{|p-j|}< 2(\ln \frac{K}{2}+1)$. Using (\ref{equ:measuredecouple1equ1}), it follows that
	\[
	\frac{1}{\sqrt{N}}\Big|\Big|\vect H[j] (\vect a_j- \vect b_j) \Big|\Big|_2 \lesssim \sigma, \quad j=1,\cdots, K.
	\]
	Furthermore, by (\ref{equ:localmeasuremultipolexpan1}) and $\frac{1}{\sqrt{N}}||\vect {Res}_j||_2\lesssim \sigma$,  we have  
	\[
	\frac{1}{\sqrt{N}}\Big|\Big|\vect H[j] \vect a_j- \vect Y_j \Big|\Big|_2\lesssim \sigma, \quad j=1,\cdots, K,
	\]
	which completes the proof.
\end{proof}

We observe that in Proposition \ref{thm:measurementdecouple1} the minimum separation distance between clusters depends on the number of the clusters $K$. This is due to the slow decay of the correlation between vectors in the span of mulitipole basis vectors associated with different clusters with respect to the cluster separation distance (see Lemma \ref{lem:multipolemxtrixorthogonality1}). 
To remedy this issue, we employ a modulation technique. This is done in the next section.

\subsection{Measurement decoupling using multipole basis with modulation}\label{sec:multipole2}
In this section, we decouple global measurement using modulated multipole basis. The modulation is intended to reduce the correlation between the multipole basis vectors from different clusters. 
For ease of presentation, we 
consider the following modulation function throughout
\[
f^t(x) = 1-x^2, \quad  x\in [-1, 1].   
\]
Other smooth functions with support in $[-1, 1]$ can be used as a modulation function as well and the treatment is similar. 
Throughout the paper, the superscript $t$ indicates that quantity is associated with modulation. We define 
$$
h_{r,O_j}^t(x)=\sqrt{2r+1}e^{i \Omega O_j x}(ix)^rf^t(x)
$$
to be the $r$-th order modulated multipole function centered at $O_j$ and
\begin{equation}\label{equ:multipolevector-modu}
	\vect h_{r,O_j}^t = (h_{r,O_j}^t(x_1), \cdots, h_{r,O_j}^t(x_N))^T
\end{equation}
its discretized version. 
We similarly have 
\begin{equation*}\label{equ:normoffremultipole-modu}
	\frac{1}{\sqrt{N}}||\vect h_{r, O_j}^t||_2\lesssim  1, \quad 1\leq j\leq K, \ r=0,1,\cdots. 
\end{equation*}
We consider the modulated measurement
\begin{equation}\label{equ:modulatedmeasurement}
	\vect Y^t(x_l) = f^t(x_l)\mathcal{F}[\mu](x_l) + f^t(x_l)\mathbf W(x_l) = \sum_{j=1}^K\vect Y_j^t(x_l)+\vect W^t(x_l), \quad x_l \in [-1, 1], \ l=1,\cdots,N.
\end{equation}
Or equivalently 
\begin{align}\label{equ:multipoleexpanequ1}
	\vect Y^t=\sum_{j=1}^{K}\sum_{r=0}^{\infty} Q_{r, O_j}\vect h_{r, O_j}^t+\vect W^t, 
\end{align}
where $Q_{r, O_j}$ is the same as defined in (\ref{equ:multipolecoeffi}).
Define $s$ as in (\ref{equ:polesrecovered2}), and 
\begin{equation}\label{equ:multipolematrixsingle-modu}
	\vect H^t[j]=\Big(\mathbf h_{0,O_j}^t,\cdots,\mathbf h_{s-1,O_j}^t\Big),\quad \vect \theta_j = (Q_{0,O_j},\cdots,  Q_{s-1,O_j})^T.
\end{equation}
We have 
\begin{equation}\label{equ:multipoleexpanequ2}
	\vect Y^t = \sum_{j=1}^K\vect H^t[j] \vect \theta_j+\vect W^t +\vect {Res}^t.  
\end{equation}
We can also show that $\frac{1}{\sqrt{N}}||\mathbf {Res}^t||_2 \lesssim \sigma$. Therefore, $\sum_{j=1}^K\vect H^t[j] \vect \theta_j$ can approximate $\vect Y^t$ to the noise level. 

We next show that the decay of the correlation between vectors in the span of modulated multipole basis vectors associated with different clusters with respect to the cluster separation distance is indeed enhanced, in comparison to Lemma \ref{lem:multipolemxtrixorthogonality1}.  

\begin{lem}\label{lem:multipolemxtrixorthogonality2}
	For $\vect H^t[j], \vect H^t[p], p\neq j$ defined as in (\ref{equ:multipolematrixsingle-modu}), assume that $|O_j-O_p| \geq \frac{2(2s+2)^2}{\Omega}$. Then  
	\begin{align*}
		&\lim_{N\rightarrow \infty} \Big|\Big \langle \frac{1}{\sqrt{N}} \vect H^t[j]\vect a_j, \frac{1}{\sqrt{N}} \vect H^t[p] \vect a_p \Big\rangle\Big| \\
		<& \frac{0.8(2s+2)^4(2s+3)^2}{|\Omega(O_j-O_p)|^3}\lim_{N\rightarrow \infty} \Big|\Big| \frac{1}{\sqrt{N}} \vect H^t[j]\vect a_j\Big|\Big|_2  \Big|\Big|\frac{1}{\sqrt{N}}\vect H^t[p] \vect a_p \Big|\Big|_{2}.  
	\end{align*}
\end{lem}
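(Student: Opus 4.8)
The plan is to mimic the proof of Lemma~\ref{lem:multipolemxtrixorthogonality1}, reducing the inner product to a scalar oscillatory integral, but now exploiting the extra decay purchased by the modulation factor $f^t(x)=1-x^2$. Writing $\omega=\Omega(O_j-O_p)$ and passing to the limit $N\to\infty$ exactly as before, the modulated inner product becomes
\[
\lim_{N\to\infty}\Big|\Big\langle \tfrac{1}{\sqrt N}\vect H^t[j]\vect a_j,\tfrac{1}{\sqrt N}\vect H^t[p]\vect a_p\Big\rangle\Big|
=\Big|\int_{-1}^{1}e^{i\omega x}\,\psi(x)\,(1-x^2)^2\,dx\Big|,
\]
where $\psi(x)=\sum_{q_1,q_2=0}^{s-1}a_{q_1,j}\bar a_{q_2,p}\,i^{q_1}(-i)^{q_2}\sqrt{2q_1+1}\sqrt{2q_2+1}\,x^{q_1+q_2}$ is the \emph{same} polynomial (of degree $\le 2s-2$) appearing in Lemma~\ref{lem:multipolemxtrixorthogonality1}; the only difference is the additional weight $(f^t(x))^2=(1-x^2)^2$ coming from the two modulation factors. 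I set $G(x):=\psi(x)(1-x^2)^2$, a polynomial of degree $\le 2s+2$.

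The key observation is that $(1-x^2)^2$ has a \emph{double} zero at each endpoint, so $G(\pm1)=G'(\pm1)=0$. First I would integrate by parts twice in the oscillatory integral; both boundary contributions vanish because of these double zeros, giving
\[
\int_{-1}^{1}e^{i\omega x}G(x)\,dx=\frac{1}{(i\omega)^2}\int_{-1}^{1}e^{i\omega x}G''(x)\,dx,
\]
so two powers of $1/|\omega|$ are gained for free relative to the unmodulated case. Applying Lemma~\ref{lem:oscillatoryintegral} to the remaining integral then yields a third power of $1/|\omega|$ together with the numerical constant, which is the source of the improved $|\omega|^{-3}$ decay claimed in the statement (as opposed to $|\omega|^{-1}$ in Lemma~\ref{lem:multipolemxtrixorthogonality1}). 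The hypothesis $|O_j-O_p|\ge \tfrac{2(2s+2)^2}{\Omega}$ is precisely what guarantees that $|\omega|$ is large enough for Lemma~\ref{lem:oscillatoryintegral} to apply to the polynomial integrand of degree at most $2s+2$.

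It remains to convert $\|G''\|$ into the weighted $L_2$ norms on the right-hand side. Here I would apply the polynomial inequality of Lemma~\ref{lem:polynomialineq2} to pass from an $L_\infty$ bound to an $L_1$ bound on a polynomial of degree $\le 2s+2$, which produces the factor $(2s+3)^2$, and a Markov-type derivative estimate to control $G''$ by $G$ (equivalently, to absorb the two derivatives landing on the degree-$(2s+2)$ polynomial), which produces the factor $(2s+2)^4$. Finally I would factor the weight symmetrically, $\psi(x)(1-x^2)^2=\big[\phi_j(x)(1-x^2)\big]\big[\overline{\phi_p(x)}(1-x^2)\big]$ with $\phi_j(x)=\sum_{q}a_{q,j}\sqrt{2q+1}(ix)^{q}$, and apply the Cauchy--Schwarz inequality so that the resulting two weighted $L_2$ norms are exactly $\lim_{N\to\infty}\|\tfrac{1}{\sqrt N}\vect H^t[j]\vect a_j\|_2$ and $\lim_{N\to\infty}\|\tfrac{1}{\sqrt N}\vect H^t[p]\vect a_p\|_2$; the weight must be kept attached to $\phi_j,\phi_p$ (rather than stripped onto $\psi$) precisely so that the weighted norms, and not the unweighted ones, appear.

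The main obstacle I anticipate is the bookkeeping in the last step: the enhancement over Lemma~\ref{lem:multipolemxtrixorthogonality1} is bought by moving two derivatives onto $G$, but this is also what inflates the polynomial prefactor from $(2s-1)^2$ to the degree-six quantity $(2s+2)^4(2s+3)^2$, and recovering the clean constant $0.8$ requires care in (i) which polynomial each inequality is applied to (degree $2s+2$ versus $2s$), (ii) the exact Markov constant for the second derivative, and (iii) keeping the modulation weight on the correct factors through Cauchy--Schwarz. The remaining estimates are routine in the spirit of Lemma~\ref{lem:multipolemxtrixorthogonality1}.
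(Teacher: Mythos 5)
Your proposal is correct and follows essentially the same route as the paper: the paper's (one-line) proof simply notes that the limiting integrand $G(x)=\psi(x)(1-x^2)^2$ is a polynomial of degree at most $2s+2$ with $G(\pm1)=G'(\pm1)=0$, applies the \emph{second} estimate of Lemma \ref{lem:oscillatoryintegral} to $G$ directly (which already packages your two integrations by parts together with the Markov bounds and yields the constant $0.8(2s+2)^4/\lambda^3$ in one step), and then uses Lemma \ref{lem:polynomialineq2} and Cauchy--Schwarz exactly as you describe. The only caveat is quantitative: your two-stage variant --- integrating by parts twice and then applying the \emph{first} estimate of Lemma \ref{lem:oscillatoryintegral} to $G''$ plus Markov's inequality for the second derivative --- produces the prefactor $\tfrac{3.2}{3}(2s+2)^2\big((2s+2)^2-1\big)\approx 1.07\,(2s+2)^4$ rather than $0.8\,(2s+2)^4$, so to recover the stated constant you should invoke the second estimate on $G$ itself.
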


\begin{proof}
	The proof is similar to that of Lemma \ref{lem:multipolemxtrixorthogonality2}. It utilizes the second estimate in Lemma \ref{lem:oscillatoryintegral} since $f(\pm 1) =0, f'(\pm 1) =0$ for $f(x) = 1-x^2$. 
\end{proof}



We have the following main result on the measurement decoupling using multipole basis with modulation. 

\begin{thm}\label{thm:measurementdecouple2}
	Suppose $N$ is large enough and the point sources in (\ref{eq-mu}) is supported in a $(K, {L}, {D}, {\Omega})$-region. Suppose 
	\begin{equation}\label{equ:clustersepacondition2}
		L\geq 4^{1/3}(2s+3)^2    
	\end{equation}
	with $s$ being defined by (\ref{equ:polesrecovered2}). Let
	\begin{equation}\label{equ:meadecoupleastsquare-modu}
		(\vect a_1, \cdots, \vect a_K)= {\arg\min}_{\vect \theta_j} \frac{1}{\sqrt{N}}\Big|\Big|\sum_{j=1}^K\vect H^t[j] \vect \theta_j-\vect Y^t\Big|\Big|_2. 
	\end{equation}
	We have 
	\begin{equation}\label{equ:localmeasuremultipolexpan2}
		\frac{1}{\sqrt{N}}\Big|\Big|\vect H^t[j] \vect a_j -\vect Y_j^t\Big|\Big|_2\lesssim \sigma   
	\end{equation}
	for each local measurement $\vect Y_j^t, 1\leq j\leq K$ in (\ref{equ:modulatedmeasurement}).
\end{thm}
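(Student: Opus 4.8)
The plan is to follow the proof of Proposition~\ref{thm:measurementdecouple1} almost verbatim, with the single decisive change that the linear correlation decay of Lemma~\ref{lem:multipolemxtrixorthogonality1} is replaced by the cubic decay of Lemma~\ref{lem:multipolemxtrixorthogonality2}. It is precisely this extra decay that makes the cross-correlation sum converge independently of $K$, thereby eliminating the $\ln(K/2)+1$ factor present in (\ref{equ:clustersepacondition1}). First I would write, exactly as in (\ref{equ:localmeasuremultipolexpan1}), the modulated expansion
$$
\vect Y_j^t = \vect H^t[j]\,\vect b_j + \vect{Res}_j^t,\qquad \vect b_j=(Q_{0,O_j},\cdots,Q_{s-1,O_j})^T,
$$
with $\frac{1}{\sqrt N}\|\vect{Res}_j^t\|_2\lesssim\sigma$ by the modulated analogue of (\ref{equ:multipoleresbound1}). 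Since $|f^t|\leq 1$ on $[-1,1]$ the modulated noise still obeys $\frac{1}{\sqrt N}\|\vect W^t\|_2\leq\sigma$, so $\frac{1}{\sqrt N}\|\sum_j\vect Y_j^t-\vect Y^t\|_2\leq\sigma$. Combining these shows that the true coefficient stack $(\vect b_1,\cdots,\vect b_K)$ already drives the global residual down to order $\sigma$; as $(\vect a_1,\cdots,\vect a_K)$ minimizes the same objective in (\ref{equ:meadecoupleastsquare-modu}), its residual can be no larger, and a triangle inequality gives $\frac{1}{\sqrt N}\|\sum_{j}\vect H^t[j](\vect a_j-\vect b_j)\|_2\lesssim\sigma$.

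The crux is to pass from this bound on the sum to a bound on each term. Writing $\vect v_j=\frac{1}{\sqrt N}\vect H^t[j](\vect a_j-\vect b_j)$ and expanding $\|\sum_j\vect v_j\|_2^2$ into its diagonal and off-diagonal parts, I would control each inner product by Lemma~\ref{lem:multipolemxtrixorthogonality2} and then apply AM--GM exactly as in (\ref{equ:measuredecouple1equ2}) to obtain
$$
\Big|\Big|\sum_{j=1}^K\vect v_j\Big|\Big|_2^2\geq\sum_{j=1}^K\|\vect v_j\|_2^2\Big(1-\sum_{p\neq j}\frac{0.8(2s+2)^4(2s+3)^2}{|\Omega(O_j-O_p)|^3}\Big).
$$
Indexing the cluster centers in increasing order so that $|\Omega(O_j-O_p)|\geq L|j-p|$, and using the \emph{convergent} tail $\sum_{p\neq j}|j-p|^{-3}\leq 2\zeta(3)<2.41$, the inner sum is at most $\frac{1.6\,\zeta(3)(2s+2)^4(2s+3)^2}{L^3}$, a quantity that no longer depends on $K$. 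Hypothesis (\ref{equ:clustersepacondition2}) reads $L^3\geq 4(2s+3)^6$, so this bound is at most $\tfrac{1.6\,\zeta(3)}{4}\frac{(2s+2)^4}{(2s+3)^4}<0.4\,\zeta(3)<\tfrac12$; hence each parenthesized factor exceeds $\tfrac12$ and $\tfrac12\sum_j\|\vect v_j\|_2^2\leq\|\sum_j\vect v_j\|_2^2\lesssim\sigma^2$. This yields $\frac{1}{\sqrt N}\|\vect H^t[j](\vect a_j-\vect b_j)\|_2\lesssim\sigma$ for each $j$, and a last triangle inequality against $\vect Y_j^t=\vect H^t[j]\vect b_j+\vect{Res}_j^t$ produces (\ref{equ:localmeasuremultipolexpan2}).

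I expect the only genuinely delicate point to be the constant bookkeeping that pins down the exponent $4^{1/3}$ in (\ref{equ:clustersepacondition2}): one must verify, as above, that $\frac{1.6\,\zeta(3)(2s+2)^4(2s+3)^2}{L^3}\leq\tfrac12$ with only a thin margin, so the numerical factors in Lemma~\ref{lem:multipolemxtrixorthogonality2} and in the tail estimate cannot be loosened carelessly. A secondary subtlety is the ordering assumption on the $O_j$ that converts the geometric separation $\min_{i<j}|O_i-O_j|\geq L/\Omega$ into the summable sequence $|j-p|^{-3}$; this is exactly the place where the cubic decay of Lemma~\ref{lem:multipolemxtrixorthogonality2} is essential, since the same step applied to the linear decay of Lemma~\ref{lem:multipolemxtrixorthogonality1} would only give the divergent harmonic sum and hence force the logarithmic $K$-dependence seen in Proposition~\ref{thm:measurementdecouple1}.
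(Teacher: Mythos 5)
Your proposal is correct and follows essentially the same route as the paper's proof: reduce to Proposition~\ref{thm:measurementdecouple1}'s argument, replace the linear correlation decay of Lemma~\ref{lem:multipolemxtrixorthogonality1} with the cubic decay of Lemma~\ref{lem:multipolemxtrixorthogonality2}, and exploit the convergence of $\sum_{p\neq j}|p-j|^{-3}$ (the paper bounds it by $2.5$ where you use $2\zeta(3)$) to remove the $\ln(K/2)$ dependence and obtain the factor $\tfrac12$ under hypothesis (\ref{equ:clustersepacondition2}). The constant bookkeeping you carry out matches the paper's inequality chain in (\ref{equ:measuredecouple2equ2}).
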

\begin{proof} In the same fashion as the proof of Proposition \ref{thm:measurementdecouple1}, for multipole expansion 
	\[
	\vect Y_j^t = \vect H^t[j] \vect b_j +\vect {Res}_j^t,
	\]
	we can show that 
	\begin{equation}\label{equ:measuredecouple2equ1}
		\frac{1}{\sqrt{N}}\Big|\Big|\sum_{j=1}^K\vect H^t[j] (\vect a_j- \vect b_j) \Big|\Big|_2\lesssim \sigma.
	\end{equation}
	Let $\vect v_j = \frac{1}{\sqrt{N}}\vect H^t[j] (\vect a_j- \vect b_j)$, for large enough $N$, we have
	\begin{equation}\label{equ:measuredecouple2equ2}
		\begin{aligned}
			&\frac{1}{N}\Big|\Big|\sum_{j=1}^K\vect H^t[j] (\vect a_j- \vect b_j) \Big|\Big|_2^2 = \Big|\Big|\sum_{j=1}^K\vect v_j \Big|\Big|_2^2 = \sum_{j=1}^K\Big|\Big|\vect v_j \Big|\Big|_2^2+\sum_{j=1}^K\sum_{p\neq j} \Big\langle \vect v_j, \vect v_p\Big \rangle \\
			\geq & \sum_{j=1}^K\Big|\Big|\vect v_j \Big|\Big|_2^2 - \sum_{j=1}^K\sum_{p\neq j}   \frac{0.8(2s+2)^4(2s+3)^2}{|\Omega(O_j-O_p)|^3} \Big|\Big|\vect v_j \Big|\Big|_2 \Big|\Big|\vect v_p \Big|\Big|_2\\
			\geq& \sum_{j=1}^K\Big|\Big|\vect v_j \Big|\Big|_2^2 - \sum_{j=1}^K\sum_{p\neq j}  \frac{0.4(2s+2)^4(2s+3)^2}{|\Omega(O_j-O_p)|^3}\Big(\Big|\Big|\vect v_j \Big|\Big|_2^2+ \Big|\Big|\vect v_p \Big|\Big|_2^2\Big)\\
			\geq & \sum_{j=1}^K\Big(\Big|\Big|\vect v_j \Big|\Big|_2^2 - \sum_{p\neq j} \frac{0.8(2s+2)^4(2s+3)^2}{|\Omega(O_j-O_p)|^3}\Big|\Big|\vect v_j \Big|\Big|_2^2\Big)\\
			\geq & \frac{1}{2}\sum_{j=1}^K\Big|\Big|\vect v_j \Big|\Big|_2^2 = \sum_{j=1}^K\frac{1}{2N}\Big|\Big|\vect H^t[j] (\vect a_j- \vect b_j) \Big|\Big|_2^2,
		\end{aligned}
	\end{equation}
	where the last inequality follows from (\ref{equ:clustersepacondition2}) and the estimate that $\sum_{p=1, p\neq j}^K\frac{1}{|p-j|^3}<2.5$. Using (\ref{equ:measuredecouple2equ1}), we get
	\[
	\frac{1}{\sqrt{N}}\Big|\Big|\vect H^t[j] (\vect a_j- \vect b_j) \Big|\Big|_2 \lesssim \sigma, \quad j=1,\cdots, K.
	\]
	Therefore  
	\[
	\frac{1}{\sqrt{N}}\Big|\Big|\vect H^t[j] \vect a_j- \vect Y_j^t \Big|\Big|_2\lesssim \sigma, \quad j=1,\cdots, K,
	\]
	which completes the proof. 
\end{proof}

Theorem \ref{thm:measurementdecouple2} demonstrates that the global modulated measurement can be decoupled into local modulated measurements when the clusters are well-separated. Compared to Proposition \ref{thm:measurementdecouple1}, the required separation distance between clusters is reduced due to the modulation technique. There is an alternative explanation to this. Observe that the point spread function (the measurement data in the spatial domain corresponding to a single point source) corresponding to the modulated measurement is given by 
$$
\int_{-1}^{1} (1-x^2)e^{-itx} dx.  
$$
By Lemma \ref{lem:oscillatoryintegral}, it has a decay rate of $1/|t^3|$, which is faster than the un-modulated one which has a decay rate of $1/|t|$. Therefore, the local modulated measurements associated with different clusters are more decorrelated as their separation distance increases.

On the other hand, we note that local measurements can be reconstructed by dividing the modulated ones $\vect H^t[j] \vect a_j$'s point-wisely by the modulation function $f^t$. Note that $f^t\approx 0$ for $x \approx \pm 1$. Therefore, only frequency components that are away from $\pm 1$ can be reconstructed stably, and those near the end points have to be discarded from the modulation technique. It can be shown that by choosing modulation function $f^t$ that has higher order of degeneracy at the end points $\pm 1$, one can improve the performance, both theoretically and numerically, of the decoupling of global modulated measurement. 
However, the benefit is at the cost of losing frequency component near the end points in the recovered local measurements. It is an interesting and important question to choose the optimal modulation function in practice. We leave this for a future investigation.

\begin{remark}
	Sufficiently many multipole basis vectors are needed for the measurement decoupling strategy in the above theorem. The choice of $s$ satisfying  (\ref{equ:polesrecovered2}) cannot be improved. Numerical experiments show that when the number of required multipole basis vectors $s$ is not big enough, local measurements cannot be recovered successfully from (\ref{equ:meadecoupleastsquare-modu}) 
	even though the clusters are well-separated and the global measurement is approximated to noise level. 
\end{remark}


Finally, we note that in Proposition \ref{thm:measurementdecouple1} and Theorem \ref{thm:measurementdecouple2}, the lower bound of $L$  depends on  $s$ in a quadratic manner. However, this estimate may not be optimal. On the other hand, it is clear that the lower bound increases as $s$ increases. In the next section, we conduct numerical experiments to demonstrate this dependence relation.

\subsection{The minimum required $L$ for the measurement decoupling}
In this section, we numerically investigate the dependence of the minimum required separation distance between clusters $L$ on the multipole  number $s$ for the two decoupling strategies in section \ref{sec:multipole} and \ref{sec:multipole2}. It is demonstrated that the modulation technique can relax the condition on the separation distance between clusters.

We first consider the decoupling strategy using the modulation  technique in section \ref{sec:multipole2}. For simplicity, we set $\Omega = 1, m=1, \sigma = 10^{-3}$, and the number of samples $N=1000$. We investigate the minimum $L$ required for $s=3, \cdots, 29,$ separately. More precisely, for each $s\in \{3, \cdots, 29\}$, we construct several $D$'s satisfying (\ref{equ:polesrecovered2}) with $\sigma = 10^{-3}$, and perform at least $1000$ random experiments for each pair $(D,  L)$ with $L\in \{3\pi, 3.5\pi,4\pi, \cdots, 50\pi\}$ being the lower bound for cluster separation distance. We consider realizations of point sources in a $(K, L, D, \Omega=1)$-region  with cluster number $K$ and cluster centers chosen randomly. We decouple the measurement as in Theorem \ref{thm:measurementdecouple2}. We then 
construct the multipole basis $\vect H^t[j]$'s and approximate the global measurement $\vect Y^t$. The global measurement is considered being well approximated if 
\begin{equation}\label{equ:wellapproximatecriterion}
	\frac{1}{\sqrt{N}}\Big|\Big|\sum_{j=1}^K\vect H^t[j] \hat {\vect a}_j-\vect Y^t\Big|\Big|_2\leq 3 \sigma.
\end{equation}  
If this is the case, we further recover the multipole coefficients by using  (\ref{equ:meadecoupleastsquare-modu}) and 
construct the local measurements. The measurement decoupling is regarded as successful if $\frac{1}{\sqrt{N}} ||\vect H^t[j] \hat {\vect a}_j - \vect Y_j^t||_2 < 6 \sigma$ for all $j$,  and otherwise unsuccessful. 


We view that the measurement decoupling strategy is successful for point sources with multi-cluster structure determined by the pair $(D,  L)$ if the success ratio out of 1000 random experiments is greater than $99\%$. For each $s\in\{3, \cdots, 29\}$, we denote $L(s)$ as the minimum $L$ in all the corresponding pairs $(D,L)$'s for successful measurement decoupling. We summarize the relation of $s$ and the $L(s)$ in Tables \ref{fig:minisepadisformeasuredecoup}. We perform similar experiments to the decoupling strategy in section \ref{sec:multipole}.
To compare the two decoupling strategy, we plot the relation of required minimum separation distance of clusters with respect to the multipole number $s$ for both strategies in Figure \ref{fig:minisepadisformeasuredecoup}. It is shown that the technique of modulation can indeed relax the minimum required separation distance of clusters for stable measurement decoupling.


\begin{table}[h]
	\centering
	\begin{tabular}{|c|c|c|c|c|c|c|c|c|c|}
		\hline
		s & 3 & 4 & 5 & 6 & 7 & 8 & 9 & 10 & 11 \\
		\hline
		L(s) & 3$\pi$  & 4$\pi$ & 5$\pi$ & 6$\pi$ &   7$\pi$ & 8$\pi$ &  9$\pi$ & 10$\pi$ & 11$\pi$  \\
		\hline
		s & 12 & 13 & 14 & 15 &  16 & 17 & 18 & 19 & 20 \\
		\hline
		L(s) & 12$\pi$ & 13$\pi$ & 14$\pi$ & 15$\pi$  & 16 $\pi$ & 18 $\pi$ & 19.5$\pi$ & 21$\pi$ & 22.5$\pi$\\
		\hline
		s  & 21 & 22 &23 & 24 &25 &26 & 27 & 28 & 29 \\
		\hline
		L(s)  & 24$\pi$ & 26 $\pi$ &  27.5$\pi$  & 29.5$\pi$& 31.5 $\pi$ & 33 $\pi$  & 36.5 $\pi$ &  38.5$\pi$  & 40.5 $\pi$\\
		\hline
	\end{tabular}
	\caption{Minimum separation of clusters for stably decoupling modulated measurements}
	\label{table:minidisformeasdecoup_xsquare}
\end{table}

\begin{figure}[h]
	\centering
	\includegraphics[width= 3.5in, height = 2.5in ]{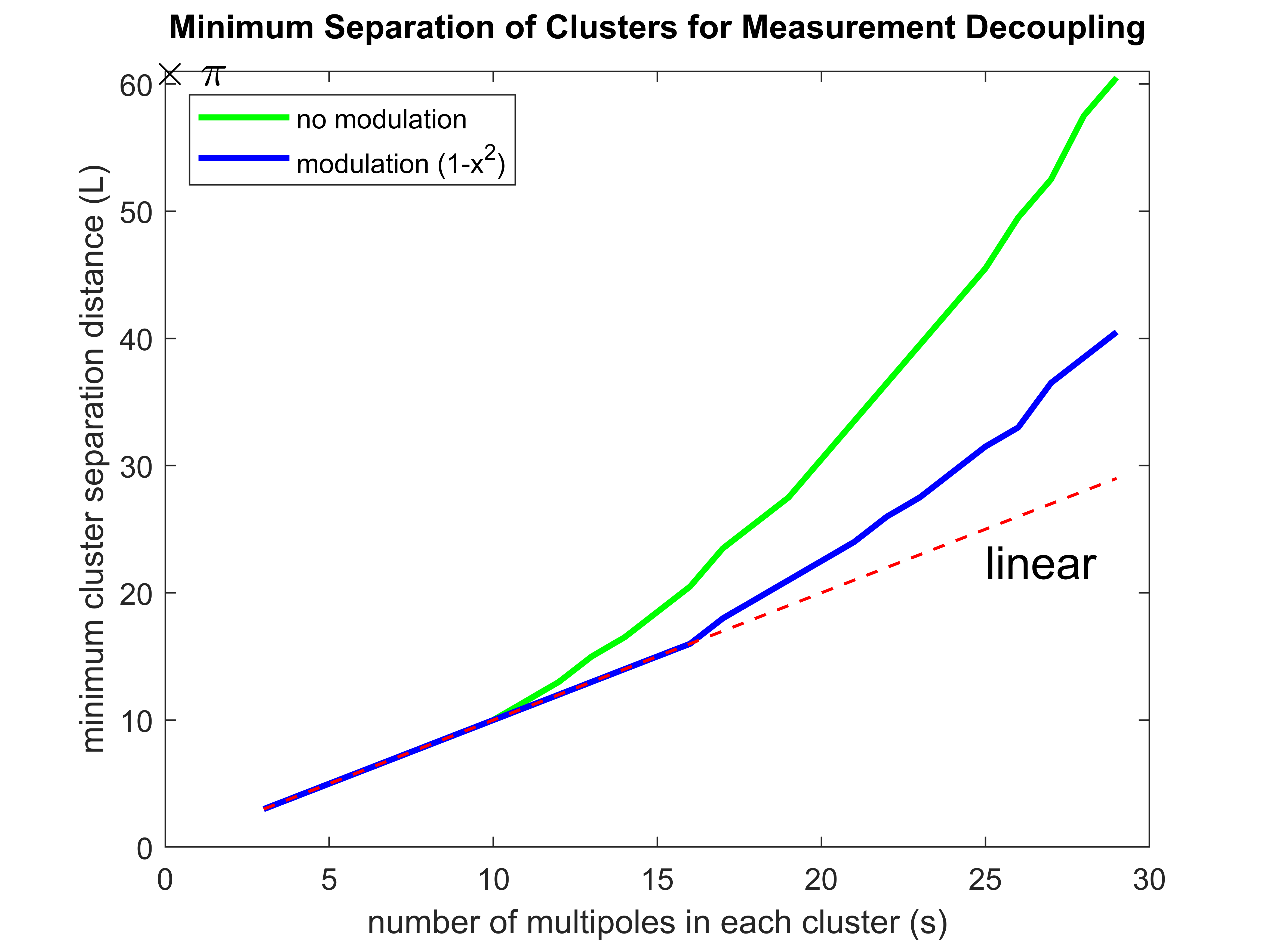}
	\caption{Plot of the minimum separation of clusters for measurement decoupling. It is shown that the required minimum separation distance between clusters can be relaxed by measurement modulation.}
	\label{fig:minisepadisformeasuredecoup}
\end{figure}



\section{A subsampled MUSIC algorithm for cluster structure detection}\label{section:clusterstructuredetection}
In this section, we develop a subsampled MUSIC algorithm to detect cluster structure for a given set of point sources with multi-cluster structure.

\subsection{MUSIC algorithm}\label{section:adaptivemusic}
We first review the standard MUSIC algorithm. We then incorporate a prior information on the cluster structure of point sources to make it more efficient. 
For simplicity, we set $\Omega =1$ in the subsequent presentation. 

In a standard MUSIC algorithm for solving the inverse problem (\ref{equ:fremeasurement1}), one first assemble the following Hankel matrix 
\begin{equation}\label{hankelmatrix1}\hat X=\begin{pmatrix}
		\mathbf Y(x_1)&\mathbf Y(x_2)&\cdots& \mathbf Y(x_{\hat N})\\
		\mathbf Y(x_2)&\mathbf Y(x_3)&\cdots&\mathbf Y(x_{\hat N+1})\\
		\cdots&\cdots&\ddots&\cdots\\
		\mathbf Y(x_{\hat N})&\mathbf Y(x_{\hat N+1})&\cdots&\mathbf Y(x_{2\hat N+1})
	\end{pmatrix},
\end{equation}
where $\hat N= \lfloor \frac{N-1}{2}\rfloor$. 
Then perform singular value decomposition for $\hat X$,
\[
\hat X=\hat U\hat \Sigma \hat U^*=[\hat U_1\quad \hat U_2]\text{diag}(\hat \sigma_1, \hat \sigma_2,\cdots,\hat \sigma_n,\hat \sigma_{n+1},\cdots,\hat \sigma_{\hat N+1})[\hat U_1\quad \hat U_2]^*,
\]
where $\hat U_1=(\hat U(1),\cdots,\hat U(n)), \hat U_2=(\hat U(n+1),\cdots,\hat U(\hat N+1))$ with $n$ being the estimated source number (model order). The source number $n$ can be detected by many algorithms such as those in \cite{wax1985detection, he2010detecting, han2013improved, liu2021theorylse}. Denote the orthogonal projection to the space $\hat U_2$ by $\hat P_2x=\hat U_2(\hat U_2^*x)$. For a test vector $\Phi(\omega)=(1, e^{ih\omega},\cdots,e^{i\hat Nh\omega})^T$ with $h$ being the spacing parameter, one define the MUSIC imaging functional 
\begin{align*}
	\hat J(\omega)=\frac{||\Phi(\omega)||_2}{||\hat P_2\Phi(\omega)||_2}=\frac{||\Phi(\omega)||_2}{||\hat U_2^*\Phi(\omega)||_2}.
\end{align*}
The local  maximizers of $\hat J(\omega)$ indicate the locations of the point sources. In practice, one can test evenly spaced points in a specified region and plot the discrete imaging functional and then determine the source locations by detecting the peaks. A peak selection algorithm is given in the appendix. 
We summarize the standard MUSIC algorithm in \textbf{Algorithm \ref{algo:standardmusic}} below.

\begin{algorithm}[H]
	\caption{\textbf{Standard MUSIC algorithm}}
	\textbf{Input:} Noise level $\sigma$, Measurements: $\mathbf{Y}=(\mathbf Y(x_1),\cdots, \mathbf Y(x_N))^T$ with $h$ the sampling distance\;
	\textbf{Input:} Region of test points $[TS, TE]$ and spacing of test points $TPS$\;
	1: Let $n$ be the estimated source number\;
	2: Let $\hat N =\lfloor\frac{N-1}{2}\rfloor$, formulate the $(\hat N +1)\times (\hat N +1)$ Hankel matrix $\hat X$ from $\mathbf{Y}$\;
	3: Compute the singular vector of $\hat X$ as $\hat U(1), \hat U(2),\cdots,\hat U(\hat N +1)$ and formulate the noise space $\hat U_{2}=(\hat U(n+1),\cdots,\hat U(\hat N +1))$\;
	4: For test points $\omega$'s in $[TS, TE]$ evenly spaced by $TPS$, construct the test vector $\Phi(\omega)=(1,e^{ih\omega}, \cdots, e^{i\hat N h\omega})^T$\; 
	5: Plot the MUSIC imaging functional $\hat J(\omega)=\frac{||\Phi(\omega)||_2}{||\hat U_2^*\Phi(\omega)||_2}$\;
	6: Select the peak locations $\hat y_j$'s in the $\hat J(\omega)$ by \textbf{Algorithm \ref{algo:peakselection}}\;
	\textbf{Return} $\hat y_j$'s.
	\label{algo:standardmusic}
\end{algorithm}

Now, assume that the point sources we are interested in are located in an interval $\Lambda = [O-D, O+D]$ that centered at $O$ with size $D$. 
We can incorporate this a prior information into the standard MUSIC to make it more efficient. To be more specific, let $y_1, \cdots, y_n$ be the point sources and its measurement without noise is given by
\begin{equation}\label{equ:localmeasurementeq1}
	\vect Y(x_l) = \sum_{q=1}^{n} a_{q}e^{iy_{q} x_l}, \ x_l \in [-1, 1], \ q=1, \cdots, n.
\end{equation}
Observe that 
\[
\vect Y(x_l) = e^{iOx_l}\sum_{q=1}^{n} a_{q}e^{i(y_{q}-O) x_l} =e^{iOx_l} {\vect Y^c}(x_l) , \ x_l \in [-1, 1],
\]
where $ {\vect Y^c}(x_l)= \sum_{q=1}^{n} a_{q}e^{i(y_{q}-O)x}$ is called the 
centralized local measurement. 
Note that the relative positions $\tilde{y}_{q}:= y_q-O$'s are located in $[-D, D]$. We can sample ${\mathbf Y^c}$ at $x_l\in[-1,1]$ with spacing $\frac{\pi}{2D}$ and use the samples to reconstruct $\tilde{y}_{q}$'s by the standard MUSIC algorithm. The original source locations can be further recovered as $\hat y_{1}=\tilde{y}_{1}+O,\cdots, \hat y_{n} = \tilde{y}_{n}+O$. We detail these steps in \textbf{Algorithm \ref{algo:adaptiveregionbasedmusic}} below.

\medskip
\begin{algorithm}[H]
	\caption{\textbf{MUSIC algorithm with a prior information}}	
	\textbf{Input:} Noise level $\sigma$, local measurements $\vect Y$\;
	\textbf{Input:} Cluster center $O$, cluster size $D$\;
	\textbf{Input:} Spacing of test points $TPS$\;
	1:  construct the centralized measurement ${\vect Y^c}(x_l) = \vect Y(x_l)e^{-i O x_l}$ with $x_l$'s spacing by $\frac{\pi}{2D}$\;
	2: Input $[-D, D], TPS, \sigma$, and ${\mathbf Y^c}$ into \textbf{Algorithm \ref{algo:standardmusic}} to recover the relative source locations $\tilde y_{1}, \cdots, \tilde y_{n}$\;
	3: Recover the source locations that $\hat y_{1}=\tilde y_{1}+O, \cdots, \hat y_{n}=\tilde y_{n}+O$\;
	\textbf{Return:} $\hat y_{q}$'s.
	\label{algo:adaptiveregionbasedmusic}
\end{algorithm}

We note that for point sources with multi-cluster structure considered in this paper, we can apply the above MUSIC algorithm to each of the local measurements. 


\subsection{Cluster structure detection}
In this section, we develop a subsampled MUSIC algorithm to detect cluster structures. 
We assume all the point sources are located in a $(K, L, D, \Omega)$-region with $\Omega=1$. We first choose an sufficiently large interval $[\tilde O-\tilde D, \tilde O+\tilde D]$ that covers all the sources. We then choose a proper shrinkage factor $0<\lambda<1$, and apply \textbf{Algorithm \ref{algo:adaptiveregionbasedmusic}} to the global measurement $\vect Y$ with samples in the interval $[-\lambda , \lambda]$ 
to get a set of point locations, say $c_j$, $1\leq j \leq K'$ for some integer $K'\geq K$. Note that due to subsampling, these locations are not necessarily the locations of the original point sources. However, their presence indicate that there are point sources nearby. 

We next estimate the cluster structures. We showed in \cite{liu2021theorylse, liu2021mathematicaloned} that $n$ point sources can be resolved if minimum separation distance between them is great than
\begin{equation}\label{equ:resolutionlimit}
	C\pi\Big(\frac{\sigma}{m_{\min}}\Big)^{\frac{1}{2n-1}},
\end{equation}
where $C$ is a constant. On the other hand, it is shown numerically that MUSIC algorithm can resolve the point sources under the above condition, see \cite{li2021stable}. Therefore, with a shrinkage factor $\lambda$ the resolution of MUSIC algorithm in the preceding step is of order $O\Big(\frac{\pi}{\lambda}(\frac{\sigma}{m_{\min}})^{\frac{1}{2n-1}}\Big)$. It indicates that when there are two point sources separated greater than $\frac{C\pi}{\lambda}(\frac{\sigma}{m_{\min}})^{\frac{1}{3}}$, the MUSIC algorithm will give two peaks. Thus for a peak centered at $c_j$, the point source/sources that correspond to it should be located in the interval $[c_j-\frac{C\pi}{\lambda}(\frac{\sigma}{m_{\min}})^{\frac{1}{3}}, c_j+\frac{C\pi}{\lambda}(\frac{\sigma}{m_{\min}})^{\frac{1}{3}}]$. After many numerical experiments, we choose $C=2$, i.e., the point sources are located in the interval $[c_j-\frac{2\pi}{\lambda}\sigma^{\frac{1}{3}}, c_j+\frac{2\pi}{\lambda}\sigma^{\frac{1}{3}}]$ for $m_{\min}\approx 1$.  

We denote $\Gamma_j = [c_j-d, c_j+d]$ with $d = \frac{2\pi}{\lambda \Omega}\sigma^{\frac{1}{3}}$. Notice that there may be multiple $c_j$'s reconstructed from the subsampled MUSIC algorithm that come from the same cluster. In this case, we may need to combine the involved intervals to get the right cluster structure. For the purpose, we introduce a parameter $ICT$ called the interval combining threshold. We combine the interval $\Gamma_j$'s if their centers has distance smaller than $ICT$. It is clear that $ICT$ should be a proper estimate of the full cluster size $2D$. On the other hand, to ensure that the cluster structure can be recovered successfully, one need the condition that the separation distance between clusters are much larger than their size, i.e. $L \gg D$. 
We summarize the cluster structure detection algorithm as \textbf{Algorithm \ref{algo:wholeclusterstructurealgorithm}} below.

\begin{algorithm}[H]
	\caption{\textbf{Cluster structure detection}}
	\textbf{Input:} Noise level $\sigma$, Measurement $\mathbf Y = (\mathbf Y (x_1), \cdots, \mathbf Y(x_N))$, Shrinkage factor: $\lambda$\;
	\textbf{Input:} Initial cluster center $\tilde O$, initial cluster size $\tilde D$\;
	\textbf{Input:} Spacing of source test points $TPS$\;
	\textbf{Input:} Interval combining threshold ICT\;
	1: Construct the subsampled measurement $\vect Y^s$ by deleting components $Y(x_l)$'s in $\vect Y$ with $|x_l|>\lambda $;\\
	2: Input $\tilde O, \tilde D, TPS, \sigma$, and $\vect Y^s$ into \textbf{Algorithm \ref{algo:adaptiveregionbasedmusic}} and recover the centers $c_j$'s\;
	3: Let $d = \frac{2\pi}{\lambda}\sigma^{\frac{1}{3}}$. 
	If the neighboring $c_j$'s are separated less than $ICT$, then the corresponding intervals are combined to form a new interval. After combining all the closely-spaced intervals, we can recover the cluster centers $O_j$'s as the centers of the new intervals and the cluster sizes $D_j$'s as the corresponding interval size\;
	4:Return cluster centers $O_j$'s, cluster size $D_j$'s. 
	\label{algo:wholeclusterstructurealgorithm}
\end{algorithm}



We remark that in the above cluster structure detection algorithm, the choice of $\lambda$ plays an important role. It depends on the cluster structure, the noise level and the available computational power. For large $\lambda$, say $\lambda \approx 1$, the underlying cluster structure can be definitely detected. However, it demands high computational cost. On the other hand, for small $\lambda$, the cost is reduced, however, the algorithm may not find the cluster structure.
In our numerical experiments, we choose $\lambda = \frac{1}{2}$.  
Note that one can choose a list of shrinkage factors $[\lambda_1, \cdots, \lambda_M]$ with $0< \lambda_1<\cdots<\lambda_M<1$ and continually detect the cluster structures for each shrinkage factor $\lambda_j$ until the cluster structure is detected. We leave the question of determining the optimal $\lambda$ for a future work.

\section{Measurement decoupling based super-resolution algorithm}\label{section:damusicalgorithm}
In this section, we develop a fast algorithm for super-resolving point sources with multi-cluster structure. It exploits the ideas of measurement decoupling and is termed D-MUSIC.



For a given set of point sources with multi-cluster structure, say $(K, D, L, \Omega)$, we first detect the cluster structure by \textbf{Algorithm \ref{algo:wholeclusterstructurealgorithm}} with a properly chosen shrinkage factor $\lambda$. 
We then decouple the global measurement into local measurements using the strategy in section \ref{sec:multipole2}. More precisely, we set $m=1$ and calculate the multipole number $s$ by (\ref{equ:polesrecovered2}). We then 
construct the multipole basis $\vect H^t[j]$'s and  
recover the multipole coefficients by 
\[
(\hat {\vect \theta}_1, \cdots,  \hat{\vect \theta}_K )= {\arg\min}_{\vect \theta_j, 1\leq j\leq K} \btwonorm{\sum_{j=1}^K\mathbf H^t[j] \vect \theta_j-\mathbf Y^t}. 
\]
The measurement decoupling is deemed successful if the 
residual term $\vect{Res}^t = \mathbf Y^t- \sum_{j=1}^K\mathbf H^t[j] \hat {\vect \theta}_j$ satisfies the condition
$$
\|\vect{Res}^t \|\leq C_{mea}\sigma
$$
for some constant $C_{mea}$.  Here instead of using 
$\vect H^t[j] \hat{\vect \theta}_j$'s for the modulated local measurements, we use the following data:
$$
\tilde {\vect Y}_j^t=\vect H^t[j] \hat{\vect \theta}_j+\vect{Res}^t, j=1, \cdots, K. 
$$
The reason is that numerically $\tilde {\vect Y}_j^t$ defined above leads to better reconstruction result when was fed to the MUSIC algorithm. See Figure \ref{fig:musicimgfromdiffmeasurments} for a numerical evidence. It is not clear what is reason behind such an interesting phenomenon. 
The local measurement can be reconstructed as
\[
\tilde {\vect Y}_j(x_l) = \tilde{ \vect Y}_j^t(x_l)/ (1-x_l^2),\ \text{for $x_l\in [-C_{msf}, C_{msf}]$},
\]
where $C_{msf}< 1$ is a cutoff threshold that ensure that $\tilde {\vect Y}_j(x_l)$ is reconstructed stably. The choice of $C_{msf}$ depends on the noise level and the behavior of the modulation function near the cutoff frequency. We summarize the detail of the measurement decoupling in \textbf{Algorithm \ref{algo:measurementdecouplebymultipole}} below. 

\begin{algorithm}[H]
	\caption{\textbf{Measurement decoupling by multipole expansion}}	
	\label{algo:measurementdecouplebymultipole}
	\textbf{Input:} Noise level $\sigma$, Measurement $\mathbf{Y}=(\mathbf Y(x_1),\cdots, \mathbf Y(x_N))^T$\;
	\textbf{Input:} Cluster centers $O_1,\cdots, O_K$, Cluster sizes $D_1, \cdots, D_K$\;
	\textbf{Input:} Noise tolerance factor $C_{mea}$, Measurement modulate function $f^t$, Measurement cutoff threshold $C_{msf}$\;
	1: Let $D=\max_{j=1}^K(D_j)$, compute the number of multipoles $s$ by $s\geq D, \frac{D^s (s+1)}{s!\sqrt{2s+1}(s+1-d)}\leq \sigma$\;
	2: Construct the modulated measurement $\vect Y^t(x_l) = \vect Y(x_l)f^t(x_l), \ l=1, \cdots, N$\;
	3: Construct the corresponding multipole matrix $\mathbf H^t[j] = (\mathbf h_{0,O_j}^t,\cdots,\mathbf h_{s-1,O_j}^t)$ by (\ref{equ:multipolematrixsingle})\;
	4: Recover $\hat {\vect \theta}_j$'s by ${\arg\min}_{\vect \theta_j, 1\leq j\leq K} ||\sum_{j=1}^K\mathbf H^t[j] \vect \theta_j-\mathbf Y^t||_2$ and the residual term is $\vect{Res}^t = \vect{Y}^t - \sum_{j=1}^K\mathbf H^t[j] \hat {\vect \theta}_j$\; 
	5: \If{ $\frac{1}{\sqrt{N}}||\sum_{j=1}^K\mathbf H^t[j] \hat {\vect \theta}_j -\mathbf Y^t||_2\leq C_{mea} \sigma$}
	{ 
		recover modulated local measurements $\tilde{\vect Y}_j^t = \vect H^t[j] \hat{\vect \theta}_j+ \vect {Res}^t, \ j=1, \cdots, K$\;
		recover local measurements $\tilde{\vect Y}_j(x_l) = \tilde{ \vect Y}_j^t(x_l)/f^t(x_l)$, for $x_l\in (-C_{msf}, C_{msf})$\;
		\textbf{Return:} measurements $\tilde {\vect Y}_j, \ j=1, \cdots, K$, DECOUPLE = SUCCESS.}
	\Else{
		\textbf{Return:} measurement $\vect Y$, DECOUPLE = FAIL.		
	}
\end{algorithm}

\begin{figure}[!h]
	\centering
	\begin{subfigure}[b]{0.28\textwidth}
		\centering
		\includegraphics[width=\textwidth]{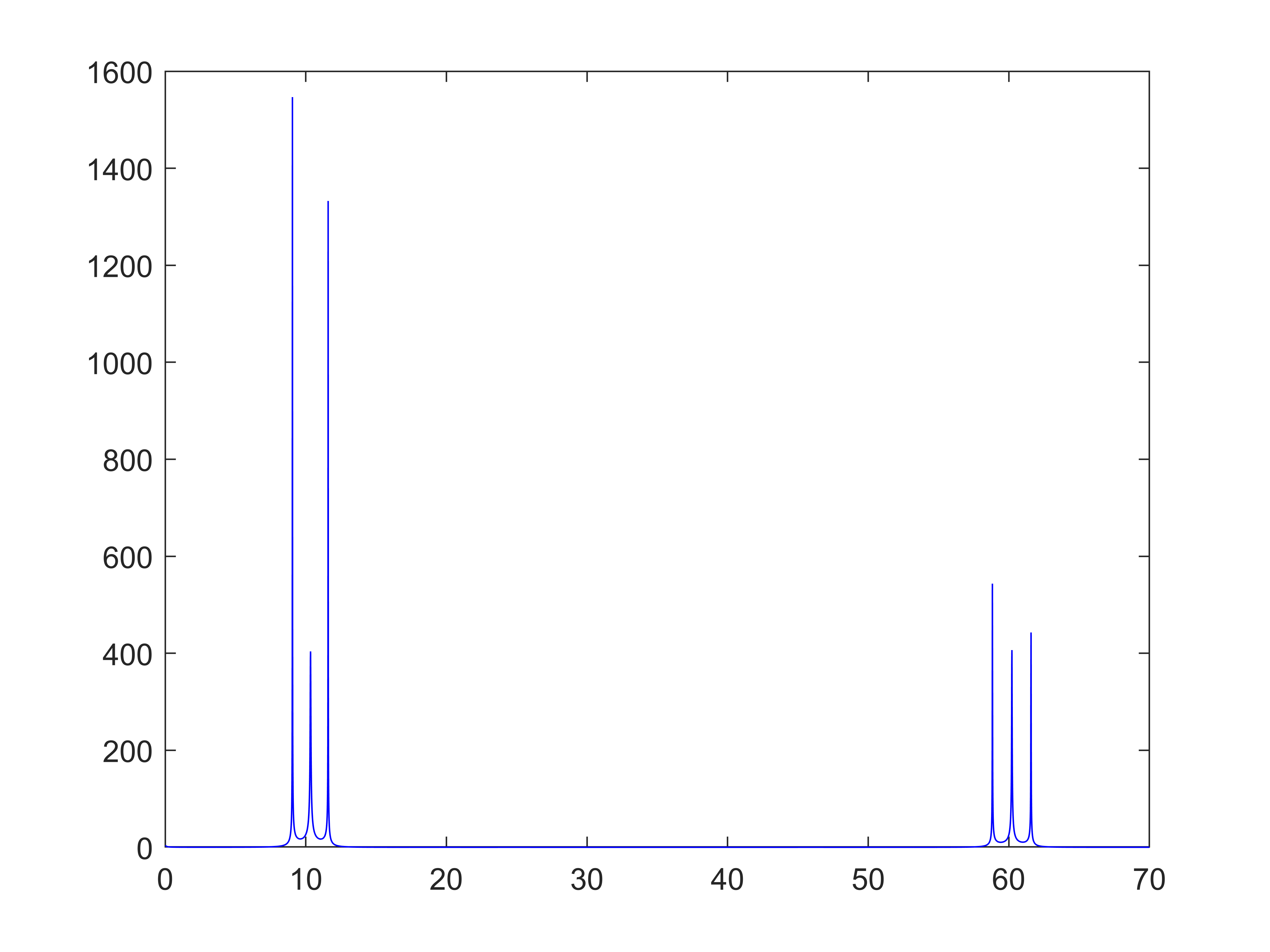}
		\caption{MUSIC image from measurement $\vect Y$}
	\end{subfigure}
	\begin{subfigure}[b]{0.28\textwidth}
		\centering
		\includegraphics[width=\textwidth]{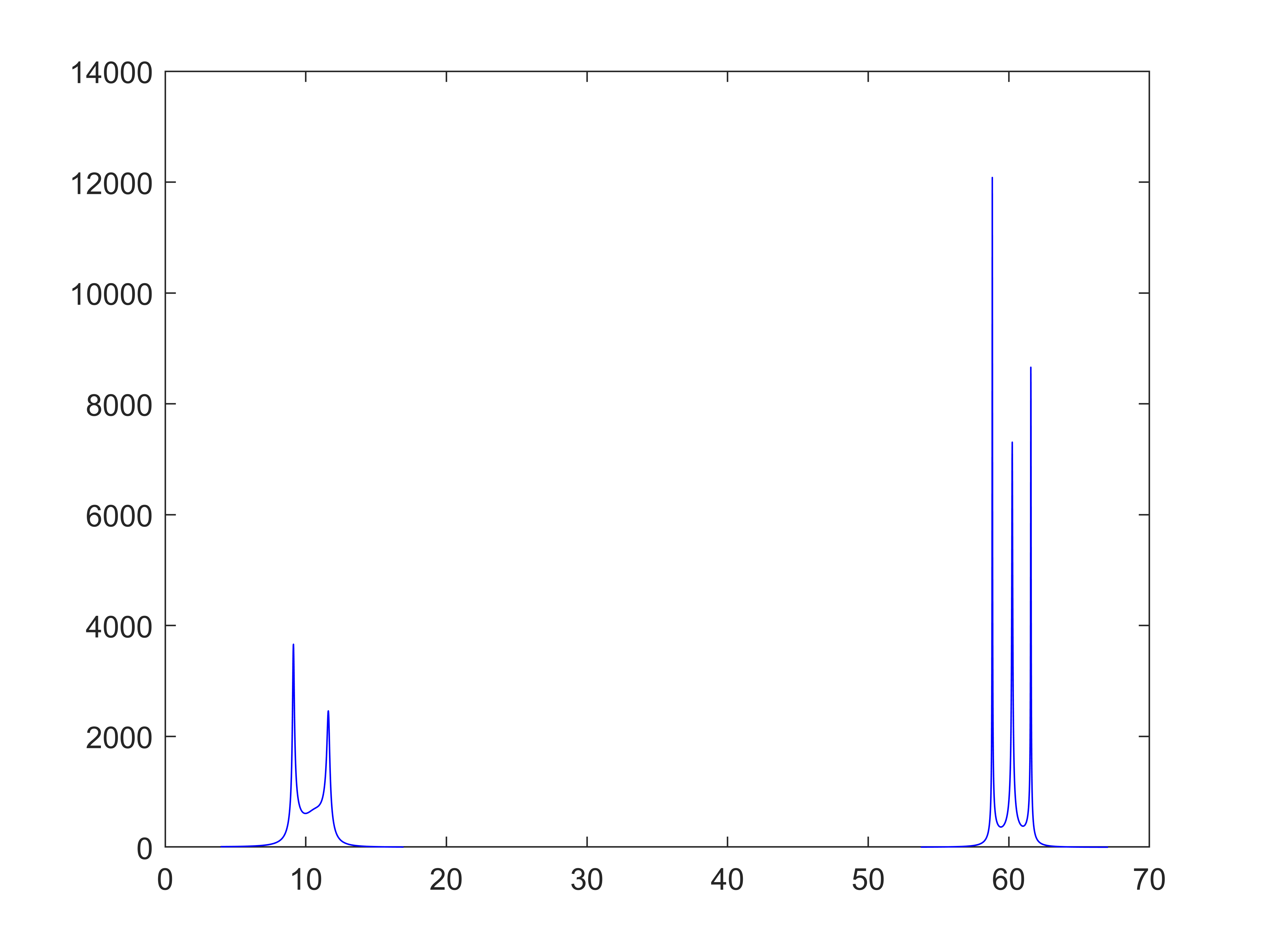}
		\caption{MUSIC image from $\hat {\vect Y}_j=\vect H[j] \hat{\vect \theta}_j$}
	\end{subfigure}
	\begin{subfigure}[b]{0.28\textwidth}
		\centering
		\includegraphics[width=\textwidth]{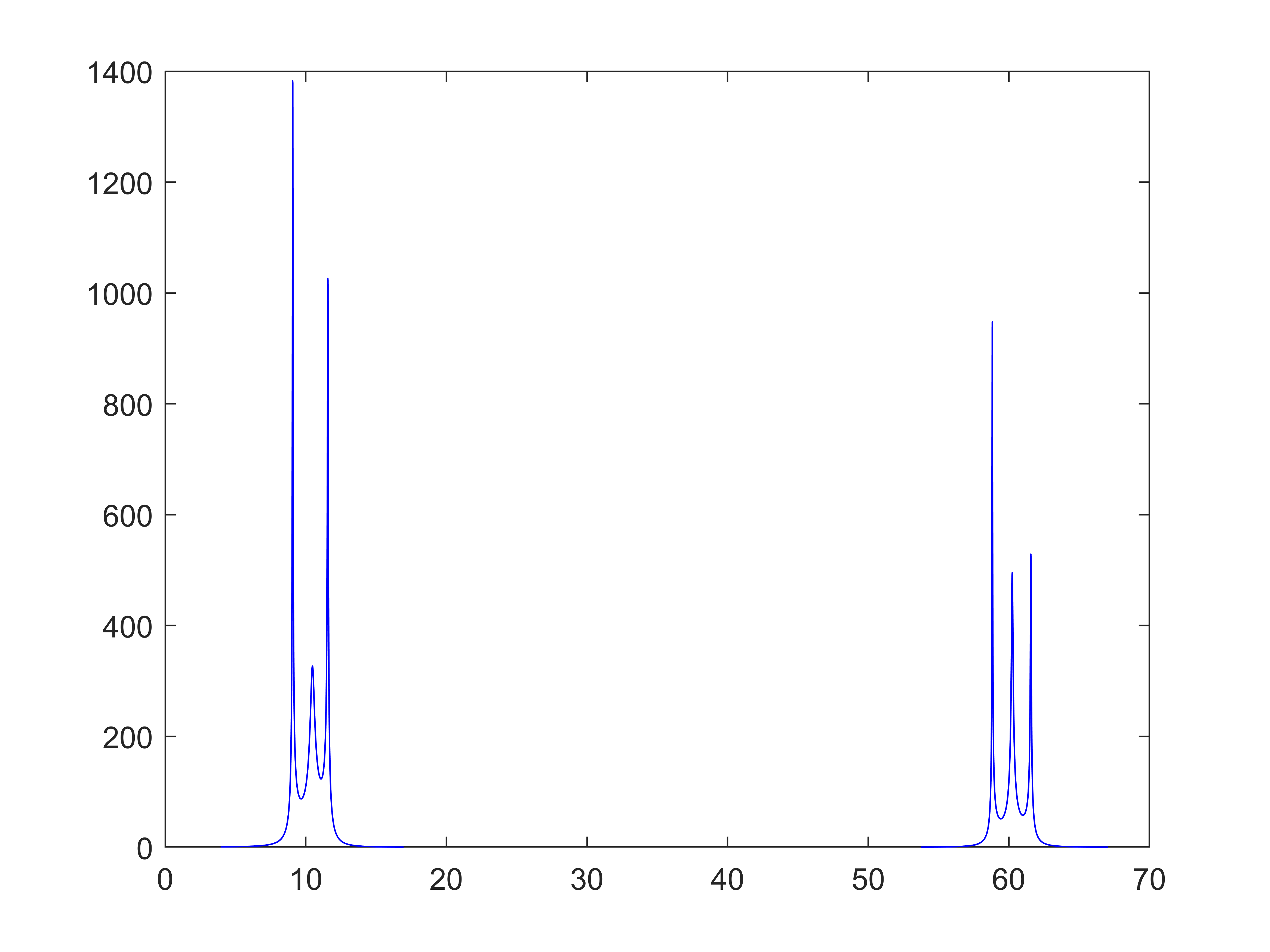}
		\caption{MUSIC image from $\hat {\vect Y}_j=\vect H[j] \hat{\vect \theta}_j+\vect{Res}$}
	\end{subfigure}
	\caption{Plots of the MUSIC images of different measurements (no measurements modulation). Figures (a), (b), and (c) are MUSIC images from measurements $\vect Y$, $\hat {\vect Y}_j=\vect H[j] \hat{\vect \theta}_j$, and $\hat {\vect Y}_j=\vect H[j] \hat{\vect \theta}_j+\vect{Res}$, respectively.  It is shown that three peaks in the left cluster cannot be recovered from the local measurements $\hat {\vect Y}_j=\vect H[j] \hat{\vect \theta}_j$, but can be restored from $\hat {\vect Y}_j=\vect H[j] \hat{\vect \theta}_j+\vect{Res}$. }
	\label{fig:musicimgfromdiffmeasurments}
\end{figure}

Finally, we recover the source locations from each local measurement by \textbf{Algorithm \ref{algo:adaptiveregionbasedmusic}}. We summarize the whole algorithm as \textbf{Algorithm \ref{algo:adaptivemeasurementdecouplingsupportrecovery}}.

\begin{algorithm}[H]
	\caption{\textbf{Decoupling based Adaptive MUSIC algorithm (D-MUSIC)}}
	\textbf{Input:} Noise level $\sigma$, Measurements: $\mathbf{Y}=(\mathbf Y(x_1),\cdots, \mathbf Y(x_N))^T$, Shrinkage factor: $\lambda$\;
	\textbf{Input:} Initial guess of the interval containing all sources $[\tilde O - \tilde{D}, \tilde O + \tilde{D}]$\;
	\textbf{Input:} Spacing of test points for cluster centers  $TPS_{cluster}$, Spacing of test points for point sources  $TS_{source}$, and Interval combine threshold $ICT$\;
	\textbf{Input:} measurement shrinkage factor $C_{msf}$\;
	1: Initialize the local measurement list as $LML = [\vect Y]$\;
	2: Initialize the cluster center list and the cluster size list as $CCL=[\tilde O], CSL=[\tilde D]$\;
	3: Input $\sigma, \lambda, \vect Y, \tilde O, \tilde D$ and $TPS_{cluster}$ to \textbf{Algorithm \ref{algo:wholeclusterstructurealgorithm}} to recover $K$ cluster centers $O_{1},\cdots, O_{K}$ and the cluster sizes $D_1, \cdots, D_K$\;
	4:Use \textbf{Algorithm \ref{algo:measurementdecouplebymultipole}} to recover the local measurements, ${\vect Y}_1, \cdots, {\vect Y}_K$\;
	5:\If {$DECOUPLE == SUCCESS$}{
		Update that $LML =[\vect Y_1, \cdots, \vect Y_K]$, $CCL = [O_1, \cdots, O_K]$, $CSL = [D_1, \cdots, D_K]$\;
	}
	6: Input each local measurement $\vect Y_j$, corresponding cluster center $O_j$, and cluster size $D_j\ $ in $LML, CCL, CSL$ respectively into \textbf{Algorithm \ref{algo:adaptiveregionbasedmusic}} to recover all the source locations $\hat y_j, j=1, \cdots, n$\;
	\textbf{Return:} LOCATIONS = $[ \hat y_1, \cdots, \hat y_n]$.
	\label{algo:adaptivemeasurementdecouplingsupportrecovery}
\end{algorithm}

\medskip
We now estimate the computational complexity of \textbf{Algorithm \ref{algo:adaptivemeasurementdecouplingsupportrecovery}}. Recall that $N$ is the number of total samples in the measurement. We first consider \textbf{Algorithm \ref{algo:wholeclusterstructurealgorithm}}. Due to subsampling, the number of samples is $\lambda N$ and the computational complexity of SVD therein is of order $O(\lambda^3N^3)$. In addition, the computational complexity of constructing MUSIC imaging functional is of order $O(N_{cluster}\lambda^2N^2)$, where $N_{cluster}< N$ is 
the number of test points for cluster-centers. 
On the other hand, the computational complexity of \textbf{Algorithm \ref{algo:standardmusic}} therein is $O(\lambda^3N^3)$ or $O(N_{cluster}\lambda^2N^2)$. We next consider \textbf{Algorithm \ref{algo:measurementdecouplebymultipole}}. Let $K$ be the number of clusters and $s$ be the number of multipole basis for each cluster. Then the size of the multipole matrix in \textbf{Algorithm \ref{algo:measurementdecouplebymultipole}} is $sK\times N$. Therefore, the involved computational complexity is of order $O((sK)^2N)$. Note that $sK \ll N$.  Finally, we consider \textbf{Algorithm \ref{algo:adaptiveregionbasedmusic}}. For each local measurement, let $N_{sub}\ll N$ be the number of samples used for reconstruction. The computational complexity of SVD in the MUSIC algorithm therein is of order $O(N_{sub}^3)$. Let $N_{source}$ be the number of test points for the point sources in the cluster. The computational complexity of constructing the MUSIC imaging functional is of order $O(N_{source} N_{sub}^2)$. 
Aggregating all these estimates, the computational complexity of \textbf{Algorithm \ref{algo:adaptivemeasurementdecouplingsupportrecovery}} is of order
$$
O(\lambda^3N^3+N_{cluster}\lambda^2N^2+(sK)^2N+KN_{sub}^3+KN_{source} N_{sub}^2) = O(\lambda^3N^3+N_{cluster}\lambda^2N^2).
$$

For comparison, we estimate the computational complexity of the standard MUSIC algorithm (\textbf{Algorithm \ref{algo:standardmusic}}). The computational complexity of SVD therein is of order $O(N^3)$ and of constructing MUSIC imaging functional is of order $O(N_{music}N^2)$, where $N_{music}$ is the number of test points. The computational complexity of standard MUSIC algorithm is of order $O(N^3+N_{music}N^2)$. Note that $N_{cluster} < N_{music}$ in practice. Therefore the computational complexity of \textbf{Algorithm \ref{algo:adaptivemeasurementdecouplingsupportrecovery}} is $\max(\lambda^3, \frac{N_{cluster}}{N_{music}}\lambda^2)$ of that of the standard MUSIC algorithm.

Finally, we conduct numerical experiments for the decoupling based super-resolution algorithm developed above.  We demonstrate that its super-resolving ability is comparable to that of the standard MUSIC algorithm but the time cost is substantially lower. 
We run $1000$ experiments with various clusters and reconstruct the source locations by D-MUSIC (\textbf{Algorithm \ref{algo:adaptivemeasurementdecouplingsupportrecovery}}) and the standard MUSIC algorithm (\textbf{Algorithm \ref{algo:standardmusic}}) respectively. We set $\Omega=1$, $\sigma = 10^{-3}$ and the sample number $N=1000$. We consider $(K, L, D, \Omega)$ regions with $L\geq 12\pi, D\approx \pi$ and signed measure $\mu =\sum_{j=1}^{K}\sum_{q=1}^{n_j}a_{q,j}\delta_{y_{q,j}}$ with random cluster number $K$. We only consider the case where each cluster has no more than three point sources and their separation distance are around $1$, which is smaller than the Rayleigh length $\pi$. These requirements make sure that the standard MUSIC algorithm can resolve all the point sources.  We choose shrinkage factor  $\lambda = \frac{1}{2}$ in \textbf{Algorithm \ref{algo:adaptivemeasurementdecouplingsupportrecovery}}. The results are shown in Figure \ref{fig:compareourtomptogrth}. Figure \ref{fig:compareourtomptogrth} (a) shows that the reconstruction error of the local measurement is of the noise level $10^{-3}$, which demonstrates the stability of the measurement decoupling algorithm. Figure \ref{fig:compareourtomptogrth} (b) plots the location recovery of \textbf{Algorithm \ref{algo:adaptivemeasurementdecouplingsupportrecovery}} and the standard MUSIC algorithm. It is shown that both algorithms can super-resolving all the point sources and the performances are comparable. Figure \ref{fig:compareourtomptogrth} (c) shows that the new algorithm is ten times faster than the standard MUSIC algorithm.


\begin{figure}[!h]
	\centering
	\begin{subfigure}[b]{0.28\textwidth}
		\centering
		\includegraphics[width=\textwidth]{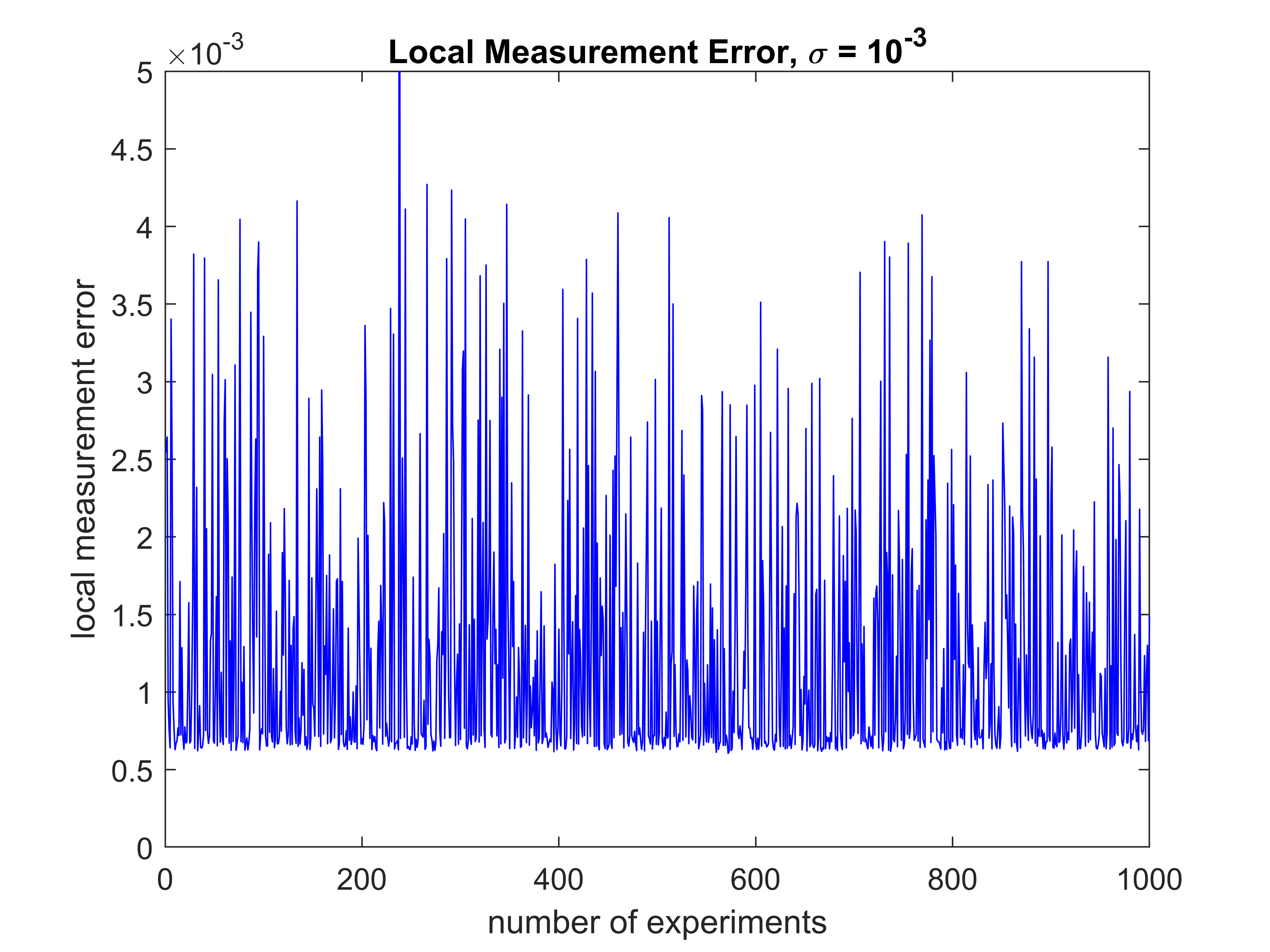}
		\caption{measurement recovery}
	\end{subfigure}
	\begin{subfigure}[b]{0.28\textwidth}
		\centering
		\includegraphics[width=\textwidth]{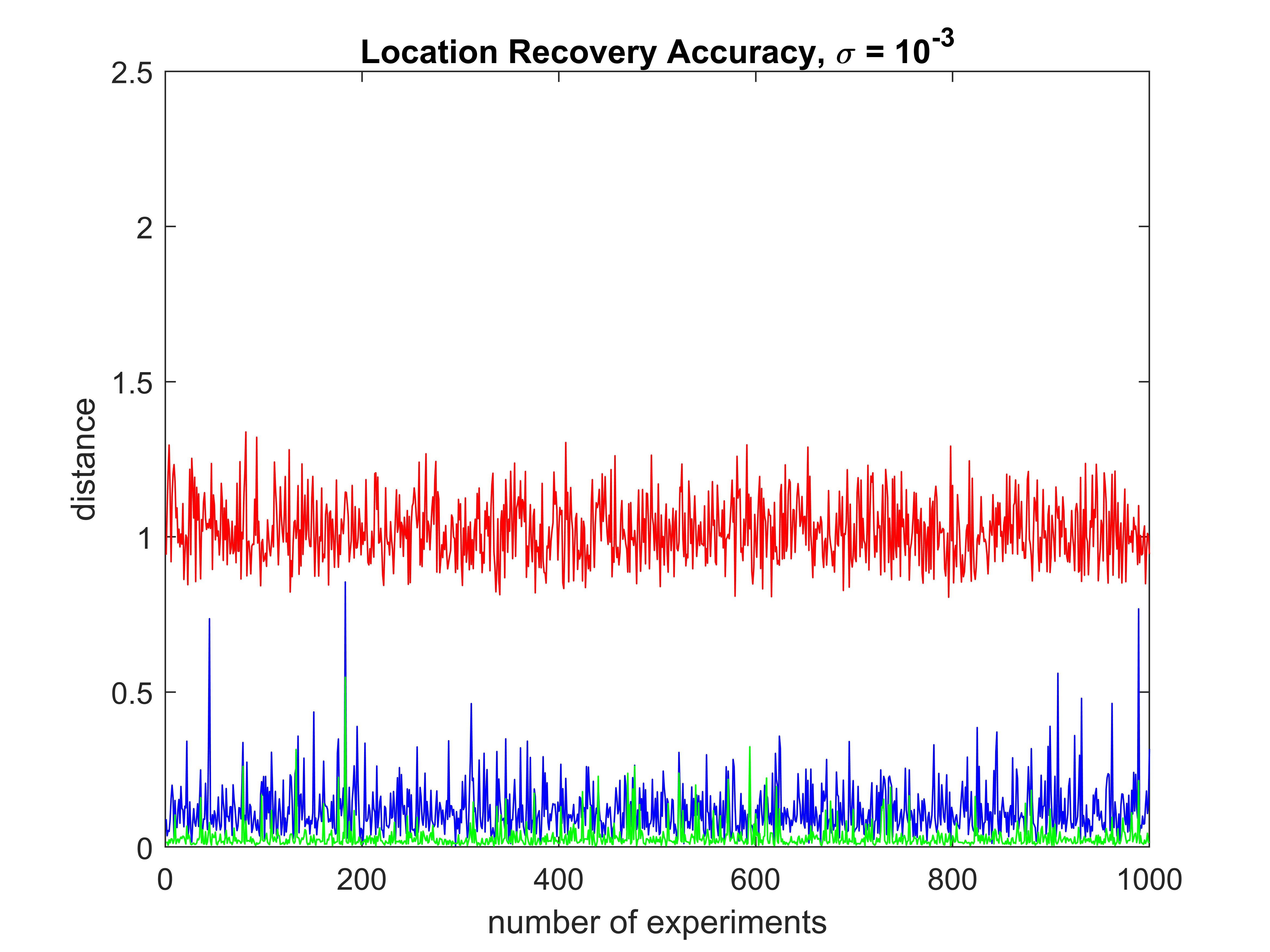}
		\caption{recovery accuracy}
	\end{subfigure}
	\begin{subfigure}[b]{0.28\textwidth}
		\centering
		\includegraphics[width=\textwidth]{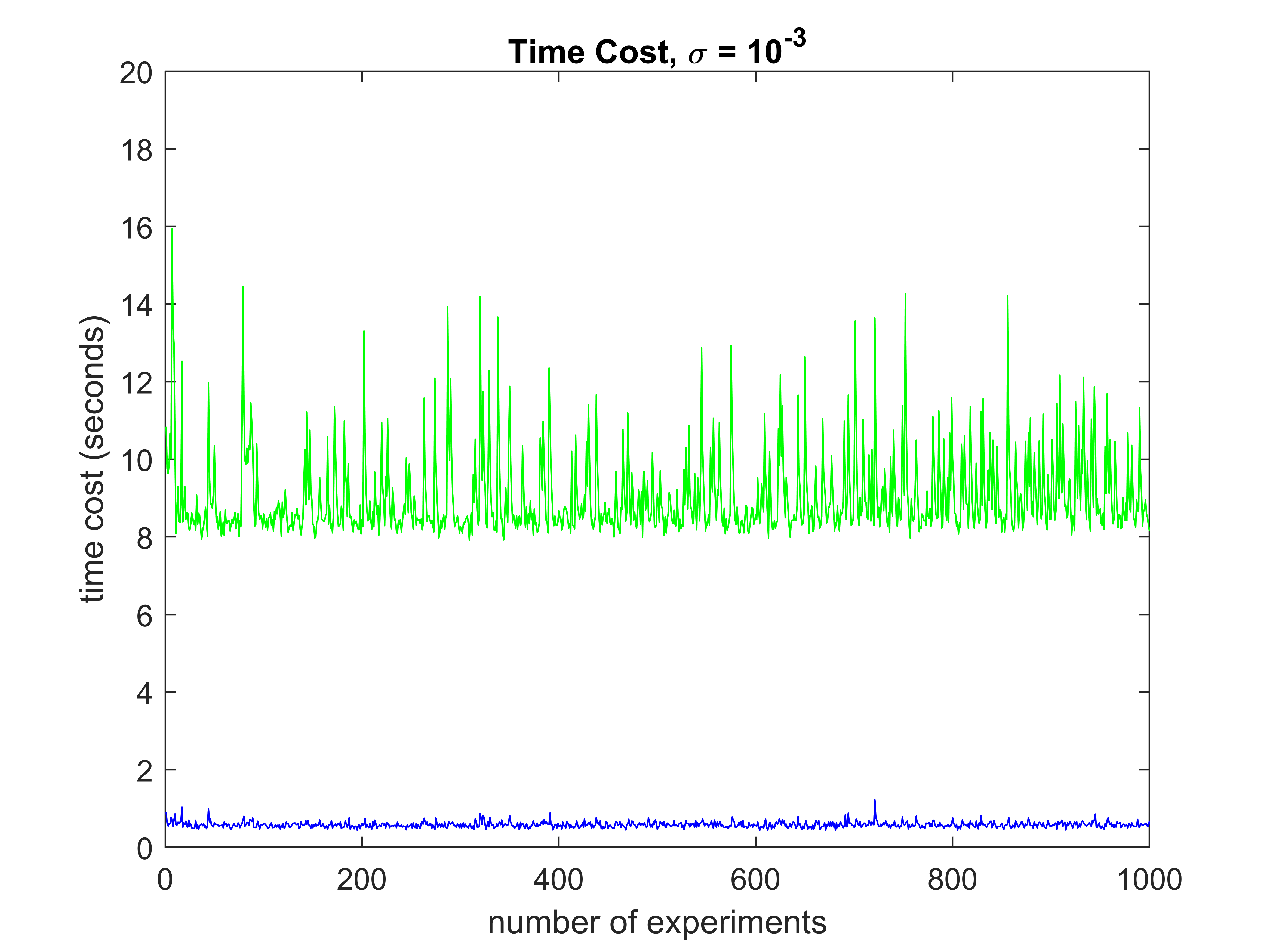}
		\caption{time cost}
	\end{subfigure}
	\caption{Plots of the accuracy of the measurement decoupling, the location recovery, and time cost by \textbf{Algorithm \ref{algo:adaptivemeasurementdecouplingsupportrecovery}} and \textbf{Algorithm \ref{algo:standardmusic}}. Figure (a) plots the accuracy of the recovered local measurements by \textbf{Algorithm \ref{algo:adaptivemeasurementdecouplingsupportrecovery}}. The blue line shows the maximum recovery error of local measurements in the experiments. The horizontal coordinate is the number of the experiments. Figure (b) plots the location recovery of the two algorithms. The red line  (around $1$) is the minimal separation distance of underlying point sources in each cluster and the blue line is the maximum deviation of the recovered locations by our algorithm to the ground truth. The green line is the one of the standard MUSIC algorithm. Figure (c) plots the time cost of the two algorithms. The green line is the time cost of the standard MUSIC algorithm and the blue line is the one of our algorithm.}
	\label{fig:compareourtomptogrth}
\end{figure}

\section{Conclusions and future works}
In this paper, we proposed an efficient algorithm, termed D-MUSIC, for super-resolving point sources with multi-cluster structure based on a measurement decoupling strategy. We demonstrated that the computational complexity of D-MUSIC is much lower than that of the standard MUSIC. There remains several interesting issues for future work. The first is on the estimate of the shrinkage factor $\lambda$ in the cluster structure detection algorithm. The choice of $\lambda$ plays a important role on the success of the algorithm. The second is on the optimal design of modulation function which can retain frequency components near the cut-off frequency while requiring less stringent condition on the separation distance of clusters for stable 
decoupling of local measurements. The last one is to extend the algorithm to the more general case when the cluster sizes may have a variety of scales. 

\section{Appendix}
\subsection{Proofs of some technical lemma}
We denote $||f||_{L_{\infty}([-1,1])}= \max_{x\in[-1,1]}|f(x)|$ and $||f||_{L_{1}([-1,1])}= \int_{a}^{b}|f|dx$ for continuous $f$.  

\begin{lem}(Markov brothers' inequality) \label{lem:polynomialineq1}
	Let $P_n(x)$ be a polynomial of degree at most $n$, we have
	\begin{align*}
		||P_{n}^{(k)}(x)||_{L_{\infty}([-1,1])}\leq \frac{n^2(n^2-1)\cdots(n^2-(k-1)^2)}{1\cdot 3\cdot 5\cdots (2k-1) }||P_{n}(x)||_{L_{\infty}([-1,1])}.
	\end{align*}
\end{lem}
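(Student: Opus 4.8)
The plan is to recognize the statement as the classical Markov brothers' (V.\,A.\ Markov) inequality, whose sharp constant is realized by the Chebyshev polynomial of the first kind $T_n(x) = \cos(n\arccos x)$. Indeed $||T_n||_{L_{\infty}([-1,1])} = 1$, and differentiating $T_n(\cos\theta) = \cos n\theta$ (or the three-term recursion) $k$ times and evaluating at $x=1$ gives
\[
T_n^{(k)}(1) = \prod_{j=0}^{k-1}\frac{n^2-j^2}{2j+1} = \frac{n^2(n^2-1)\cdots(n^2-(k-1)^2)}{1\cdot 3\cdot 5\cdots(2k-1)},
\]
which is exactly the claimed constant. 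So the first step is to verify this identity, thereby reducing the lemma to the assertion that $T_n$ is the extremizer. One should note here that merely iterating the first-order case ($k=1$, constant $n^2$) would yield the strictly larger constant $n^2(n-1)^2\cdots(n-k+1)^2$, so the sharp bound genuinely requires identifying the Chebyshev extremal structure rather than bootstrapping the elementary Markov inequality.

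After normalizing so that $||P_n||_{L_{\infty}([-1,1])}\le 1$, I would bound $P_n^{(k)}$ pointwise through an interpolation functional. Since a polynomial of degree at most $n$ is determined by $n+1$ values, Lagrange interpolation at the Chebyshev extremal nodes $x_i = \cos(i\pi/n)$, $i=0,\dots,n$, lets me write $P_n^{(k)}(t) = \sum_i \ell_i^{(k)}(t)\,P_n(x_i)$, whence $|P_n^{(k)}(t)| \le \sum_i |\ell_i^{(k)}(t)|$. The crucial structural point is that at $t=1$ the coefficients $\ell_i^{(k)}(1)$ carry the alternating sign pattern shared by the node values $T_n(x_i) = (-1)^i$ of the Chebyshev polynomial; consequently the bound $\sum_i|\ell_i^{(k)}(1)|$ is attained precisely by $P_n = T_n$ and equals $T_n^{(k)}(1)$. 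This yields the endpoint estimate $|P_n^{(k)}(1)| \le T_n^{(k)}(1)$.

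The main obstacle is that the lemma controls the maximum of $|P_n^{(k)}|$ over all of $[-1,1]$, not merely its endpoint value, and for a general $P_n$ this maximum need not occur at $\pm 1$. Two ingredients close this gap. First, for the extremizer itself one uses that $T_n^{(k)}$ is, up to a positive scalar, the ultraspherical (Gegenbauer) polynomial $C_{n-k}^{(k+1/2)}$, whose modulus on $[-1,1]$ is maximized at the endpoints; hence $||T_n^{(k)}||_{L_{\infty}([-1,1])} = T_n^{(k)}(1)$, so the right-hand constant is indeed the global extremal value. Second, to promote the endpoint bound to an interior point $t\in(-1,1)$ I would invoke a majorization of Duffin--Schaeffer type, showing that the interpolation coefficients retain enough sign structure that $\sum_i|\ell_i^{(k)}(t)| \le \sum_i|\ell_i^{(k)}(1)| = T_n^{(k)}(1)$ for every $t$. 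I expect this monotonicity toward the endpoint to be the delicate crux of the argument; the constant-identification and the endpoint interpolation bound are comparatively routine once the Chebyshev extremal structure is in place.
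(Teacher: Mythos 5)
The paper never proves this lemma: it is stated as the classical Markov brothers' (V.~A.~Markov, 1892) inequality and used as a black box, so there is no in-paper argument to compare against. Judged on its own, your roadmap correctly identifies the extremal structure: the constant equals $T_n^{(k)}(1)$ for the Chebyshev polynomial $T_n$; iterating the $k=1$ case gives a strictly worse constant; the endpoint bound $|P_n^{(k)}(1)|\le\sum_i|\ell_i^{(k)}(1)|=T_n^{(k)}(1)$ via Lagrange interpolation at the nodes $x_i=\cos(i\pi/n)$ is sound (each numerator $\prod_{j\ne i}(t-x_j)$ has all roots in $[-1,1]$, so its derivatives at $t=1$ are nonnegative and $\ell_i^{(k)}(1)$ carries the sign $(-1)^i$, making the sum of absolute values equal to the $k$-th derivative at $1$ of the interpolant of the data $(-1)^i$, i.e.\ of $T_n$); and $\|T_n^{(k)}\|_{L_{\infty}([-1,1])}=T_n^{(k)}(1)$ is a standard ultraspherical fact.

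The genuine gap is exactly the step you flag as the ``delicate crux'': the claim that $\sum_i|\ell_i^{(k)}(t)|\le\sum_i|\ell_i^{(k)}(1)|$ for all $t\in[-1,1]$. Since the supremum of the linear functional $p\mapsto p^{(k)}(t)$ over the cube $\{|p(x_i)|\le 1\}$ is attained at a vertex and equals $\sum_i|\ell_i^{(k)}(t)|$, that claim is \emph{equivalent} to the Duffin--Schaeffer theorem, which is itself a strengthening of V.~A.~Markov's inequality; so ``invoking a majorization of Duffin--Schaeffer type'' assumes the entire content of the result rather than proving it. No elementary monotonicity-toward-the-endpoint argument is known: the sign pattern of the $\ell_i^{(k)}(t)$ degenerates in the central zone of $[-1,1]$, and the actual proofs (V.~A.~Markov's original zone-by-zone analysis, or Duffin--Schaeffer's zero-counting argument with auxiliary trigonometric polynomials) handle the interior by a separate and substantially harder mechanism. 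As a self-contained proof your proposal is therefore incomplete at its decisive step; as a reduction to the classical literature it is acceptable and, in fairness, no less rigorous than the paper's own treatment, which simply cites the result.
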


\begin{lem}\label{lem:polynomialineq2}
	Let $P_n(x)$ be a polynomial of degree at most $n$, we have
	\[
	||P_{n}(x)||_{L_{\infty}([-1,1])}\leq (n+1)^2||P_{n}(x)||_{L_{1}([-1,1])}.
	\]
\end{lem}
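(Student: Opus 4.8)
The plan is to prove this Nikolskii-type inequality by representing the point-evaluation functional $P_n \mapsto P_n(x_0)$ as integration against the reproducing kernel of the space of polynomials of degree at most $n$, and then bounding the $L_\infty$ norm of that kernel. First I would let $M = \|P_n\|_{L_{\infty}([-1,1])}$ be attained at some $x_0 \in [-1,1]$, and introduce the orthonormal Legendre polynomials $\tilde P_k(x) = \sqrt{\tfrac{2k+1}{2}}\,P_k(x)$, $k=0,\dots,n$, which form an orthonormal basis of the degree-$\le n$ polynomials for the inner product $\langle f,g\rangle = \int_{-1}^1 f g\, dx$.

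Next, expanding $P_n = \sum_{k=0}^n c_k \tilde P_k$ with $c_k = \int_{-1}^1 P_n \tilde P_k\, dx$ and setting $K(x,y) = \sum_{k=0}^n \tilde P_k(x)\tilde P_k(y)$, I obtain the reproducing identity
$$
P_n(x_0) = \sum_{k=0}^n c_k \tilde P_k(x_0) = \int_{-1}^1 K(x_0,y)\,P_n(y)\, dy,
$$
from which $M = |P_n(x_0)| \le \|K(x_0,\cdot)\|_{L_{\infty}([-1,1])}\, \|P_n\|_{L_{1}([-1,1])}$. It then remains only to bound the kernel on the diagonal, and this is where the constant enters. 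By Cauchy--Schwarz, $|K(x_0,y)|^2 \le K(x_0,x_0)\,K(y,y)$, so $\|K(x_0,\cdot)\|_{L_{\infty}} \le \max_{x\in[-1,1]} K(x,x)$, and using the classical bound $\max_{[-1,1]}|P_k| = |P_k(\pm1)| = 1$ I would estimate
$$
K(x,x) = \sum_{k=0}^n \frac{2k+1}{2}\,P_k(x)^2 \le \sum_{k=0}^n \frac{2k+1}{2} = \frac{(n+1)^2}{2}.
$$
Combining these yields $M \le \tfrac{(n+1)^2}{2}\|P_n\|_{L_{1}([-1,1])} \le (n+1)^2 \|P_n\|_{L_{1}([-1,1])}$, which is the claim, in fact with a factor of two to spare.

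The main obstacle is not the algebra but cleanly justifying the two standard ingredients the argument rests on: the reproducing identity (which is just orthonormality of the $\tilde P_k$) and the uniform bound $|P_k|\le 1$ on $[-1,1]$ together with $\sum_{k=0}^n (2k+1) = (n+1)^2$; with those in hand the estimate is immediate. I note that one might instead try the more elementary route through the Markov brothers' inequality of Lemma \ref{lem:polynomialineq1}: bounding $|P_n'|\le n^2 M$ forces $|P_n|$ to stay above a tent of height $M$ and half-width $1/n^2$ centered at $x_0$, whose area lower-bounds $\|P_n\|_{L_1}$. That argument is shorter, but when the maximum sits at an endpoint $x_0 = \pm 1$ only half the tent lies in $[-1,1]$, leaving the weaker constant $2n^2$, which exceeds $(n+1)^2$ once $n\ge 3$. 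For this reason I would favour the reproducing-kernel argument in order to reach the stated constant.
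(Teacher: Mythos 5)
Your proof is correct, but it takes a genuinely different route from the paper's. The paper stays entirely within the Markov-inequality framework: it sets $q(x)=\int_{-1}^x P_n(t)\,dt$, notes that $\|q\|_{L_\infty([-1,1])}\le \|P_n\|_{L_1([-1,1])}$ and that $q$ has degree at most $n+1$, and then applies Lemma \ref{lem:polynomialineq1} with $k=1$ to get $\|P_n\|_{L_\infty}=\|q'\|_{L_\infty}\le (n+1)^2\|q\|_{L_\infty}\le (n+1)^2\|P_n\|_{L_1}$ --- a two-line reduction to the lemma already stated in the appendix, using no properties of orthogonal polynomials. You instead use the reproducing (Christoffel--Darboux) kernel of the orthonormal Legendre system; every step checks out (the reproducing identity, the Cauchy--Schwarz bound $|K(x_0,y)|\le\sqrt{K(x_0,x_0)K(y,y)}$, the bounds $|P_k|\le 1$ and $\sum_{k=0}^n(2k+1)=(n+1)^2$), and you even obtain the sharper constant $(n+1)^2/2$, which is the natural Christoffel-function bound. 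The trade-off is that you import machinery (Legendre orthonormality, kernel representations) that the paper does not otherwise need, whereas the paper's argument is shorter and self-contained given Lemma \ref{lem:polynomialineq1}. One remark on your closing discussion: you correctly observe that the \emph{tent} application of Markov degrades to $2n^2$ when the maximum sits at an endpoint, but this is a limitation of that particular use of Markov, not of the Markov route as such --- applying Markov to the antiderivative, as the paper does, sidesteps the endpoint issue entirely and recovers the stated constant $(n+1)^2$ exactly.
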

\begin{proof}
	Given a polynomial $p$ of degree at most $n$, consider $q(x)=\int_{-1}^xp(t)dt$. We have $ |q(x)|\leq \int_{-1}^x|p(t)|dt\leq ||p||_{L_{1}([-1,1])}$ and therefore $||q||_{L_{\infty}([-1,1])}\leq||p||_{L_{1}([-1,1])}$. By Lemma \ref{lem:polynomialineq1}, one has $||q'||_{L_\infty([-1,1])}\leq (n+1)^2||q||_{L_\infty([-1,1])}$ since $q$ is of degree at most $n+1$. By $q' = p$, we have 
	\[
	||p||_{L_{\infty}([-1,1])}\leq (n+1)^2||p||_{L_{1}([-1,1])}.
	\]
\end{proof}

\begin{lem}\label{lem:oscillatoryintegral}
	Let $\psi$ be a polynomial of degree at most $n$, then for $\lambda \geq 2n^2$ we have
	\begin{align*}
		\Big|\int_{-1}^{1}e^{i\lambda x} \psi(x)dx\Big| < \frac{3.2||\psi||_{L_{\infty}([-1,1])}}{\lambda}.
	\end{align*}
	Suppose further that $\psi(\pm 1)=0, \psi'(\pm 1)=0$, we have 
	\begin{align*}
		\Big|\int_{-1}^{1}e^{i\lambda x} \psi(x)dx\Big| < \frac{0.8n^4||\psi||_{L_{\infty}([-1,1])}}{\lambda^3}.
	\end{align*}
\end{lem}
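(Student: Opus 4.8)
The final statement is Lemma~\ref{lem:oscillatoryintegral}, an oscillatory integral estimate for polynomials. Let me plan a proof.

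\medskip

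The plan is to control the integral $\int_{-1}^1 e^{i\lambda x}\psi(x)\,dx$ by repeated integration by parts, exploiting the fact that each integration by parts produces a factor of $1/\lambda$ while differentiating $\psi$, and then bounding the derivatives of $\psi$ via the Markov brothers' inequality (Lemma~\ref{lem:polynomialineq1}). For the first estimate, I would integrate by parts once, writing
\[
\int_{-1}^1 e^{i\lambda x}\psi(x)\,dx = \Big[\frac{e^{i\lambda x}}{i\lambda}\psi(x)\Big]_{-1}^{1} - \frac{1}{i\lambda}\int_{-1}^1 e^{i\lambda x}\psi'(x)\,dx.
\]
The boundary term is bounded by $\frac{2}{\lambda}\|\psi\|_{L_\infty}$, and the remaining integral is bounded by $\frac{2}{\lambda}\|\psi'\|_{L_\infty}$. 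Invoking Lemma~\ref{lem:polynomialineq1} with $k=1$ gives $\|\psi'\|_{L_\infty}\leq n^2\|\psi\|_{L_\infty}$, so the remaining integral contributes at most $\frac{2n^2}{\lambda^2}\|\psi\|_{L_\infty}$. Under the hypothesis $\lambda\geq 2n^2$, this is at most $\frac{1}{\lambda}\|\psi\|_{L_\infty}$, and summing the two contributions yields a bound of the form $\frac{C}{\lambda}\|\psi\|_{L_\infty}$; tracking the constants carefully (the $2$ from the boundary term plus a geometric-type tail from iterating, or simply the crude $2+1$) should land below the stated $3.2$.

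\medskip

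For the second, sharper estimate, the hypotheses $\psi(\pm1)=0$ and $\psi'(\pm1)=0$ are exactly what is needed to kill the boundary terms in the first two integrations by parts. I would integrate by parts three times: the first two boundary terms vanish because $\psi$ and $\psi'$ vanish at $\pm1$, leaving
\[
\int_{-1}^1 e^{i\lambda x}\psi(x)\,dx = \frac{1}{(i\lambda)^2}\Big[\frac{e^{i\lambda x}}{i\lambda}\psi''(x)\Big]_{-1}^1 - \frac{1}{(i\lambda)^3}\int_{-1}^1 e^{i\lambda x}\psi'''(x)\,dx,
\]
so that everything is now weighted by $1/\lambda^3$. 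Bounding the surviving boundary term by $\frac{2}{\lambda^3}\|\psi''\|_{L_\infty}$ and the last integral by $\frac{2}{\lambda^3}\|\psi'''\|_{L_\infty}$, and then applying Lemma~\ref{lem:polynomialineq1} with $k=2$ and $k=3$ to bound $\|\psi''\|_{L_\infty}$ and $\|\psi'''\|_{L_\infty}$ by an explicit multiple of $n^4$ (respectively $n^6$) times $\|\psi\|_{L_\infty}$, I would collect terms. The $n^6$ term, weighted by $1/\lambda^3$ and using $\lambda\geq 2n^2$, again reduces to an $n^4/\lambda^3$ scale, so the whole expression is of order $\frac{n^4}{\lambda^3}\|\psi\|_{L_\infty}$.

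\medskip

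The main obstacle is not the structure but the constants: the target constants $3.2$ and $0.8$ are explicit and fairly tight, so I expect the delicate part to be bookkeeping the Markov constants $\frac{n^2(n^2-1)\cdots(n^2-(k-1)^2)}{1\cdot3\cdots(2k-1)}$ for $k=1,2,3$ and verifying that, after using $\lambda\geq 2n^2$ to absorb the higher-degree terms, the accumulated numerical factors stay below the claimed thresholds. One must be careful that $\|\psi''\|_{L_\infty}\leq \frac{n^2(n^2-1)}{3}\|\psi\|_{L_\infty}\leq \frac{n^4}{3}\|\psi\|_{L_\infty}$ and that the $n^6$-type contribution from $\|\psi'''\|_{L_\infty}$, once divided by an extra $\lambda\geq 2n^2$, is dominated by the $n^4$ scale; assembling these crude bounds with the factor of $2$ from each boundary evaluation should comfortably fit under $0.8$.
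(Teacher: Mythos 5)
Your handling of the first estimate is correct and in fact more economical than the paper's: one integration by parts, the crude bound $\big|\int_{-1}^{1}e^{i\lambda x}\psi'(x)dx\big|\le 2\|\psi'\|_{L_{\infty}([-1,1])}\le 2n^2\|\psi\|_{L_{\infty}([-1,1])}$, and $\lambda\ge 2n^2$ give a total of $\tfrac{2}{\lambda}+\tfrac{1}{\lambda}=\tfrac{3}{\lambda}$ times $\|\psi\|_{L_{\infty}([-1,1])}$, which is below $3.2/\lambda$. (The paper instead integrates by parts $n+1$ times, so that only boundary terms survive, and sums the resulting series using Markov's inequality at every order.)

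The second estimate, as you have set it up, does not close. After three integrations by parts you are left with the remainder $\tfrac{1}{(i\lambda)^3}\int_{-1}^{1}e^{i\lambda x}\psi'''(x)dx$, which you propose to bound by $\tfrac{2}{\lambda^3}\|\psi'''\|_{L_{\infty}([-1,1])}\le \tfrac{2n^6}{15\lambda^3}\|\psi\|_{L_{\infty}([-1,1])}$ via Markov with $k=3$. This is of order $n^6/\lambda^3$, not $n^4/\lambda^3$: the ratio between the two is $n^2$ and does not involve $\lambda$ at all, so the hypothesis $\lambda\ge 2n^2$ cannot convert one into the other. The ``extra $\lambda$'' you invoke to absorb the surplus $n^2$ simply is not present at that stage --- it would only appear after a fourth integration by parts. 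Concretely, for every $n\ge 3$ your bound $\tfrac{2n^6}{15\lambda^3}$ already exceeds the target $\tfrac{0.8\,n^4}{\lambda^3}$. The repair is short: apply your already-proved first estimate to the polynomial $\psi'''$ (degree at most $n-3$, and $\lambda\ge 2n^2\ge 2(n-3)^2$), which gives $\big|\int_{-1}^{1}e^{i\lambda x}\psi'''(x)dx\big|<\tfrac{3.2}{\lambda}\|\psi'''\|_{L_{\infty}([-1,1])}$, hence a remainder at most $\tfrac{3.2\,n^6}{15\lambda^4}\le\tfrac{1.6\,n^4}{15\lambda^3}\|\psi\|_{L_{\infty}([-1,1])}$; together with the surviving boundary term $\tfrac{2}{\lambda^3}\|\psi''\|_{L_{\infty}([-1,1])}\le\tfrac{2n^4}{3\lambda^3}\|\psi\|_{L_{\infty}([-1,1])}$ this totals $(\tfrac{2}{3}+\tfrac{1.6}{15})\tfrac{n^4}{\lambda^3}<0.8\,\tfrac{n^4}{\lambda^3}$. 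The paper avoids the issue by integrating by parts until the polynomial is exhausted, so only boundary terms remain: the $k$-th term is $\tfrac{2}{\lambda}\cdot\tfrac{n^2(n^2-1)\cdots(n^2-(k-1)^2)}{(2k-1)!!\,\lambda^{k}}\|\psi\|_{L_{\infty}([-1,1])}$, successive terms shrink by at least the factor $\tfrac{n^2}{(2k+1)\lambda}\le\tfrac{1}{2(2k+1)}$, and the tail starting at $k=2$ is therefore controlled by a small multiple of its first term $\tfrac{2n^4}{3\lambda^3}$.
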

\begin{proof} By integration by parts, we have
	\begin{align*}
		\int_{-1}^{1}e^{i \lambda x} \psi(x)dx = \frac{e^{i\lambda x}\psi(x)}{i\lambda}\Big|_{-1}^1 -  \frac{\int_{-1}^{1}e^{i\lambda x}\psi'(x)dx}{i\lambda}= \frac{1}{i\lambda}\sum_{k=0}^n\frac{e^{i\lambda x}\psi^{(k)}(x)}{(-i\lambda)^{k}}\Big|_{-1}^1.
	\end{align*}
	Taking absolute value to both sides, we have 
	\begin{align*}
		&\Big|\int_{-1}^{1}e^{i \lambda x} \psi(x)dx\Big| \leq \frac{1}{\lambda} \Big(2\sum_{k=0}^n \frac{||\psi^{(k)}||_{L_{\infty}([-1,1])}}{\lambda^{k}}\Big)\\
		\leq& \frac{2||\psi||_{L_{\infty}([-1,1])}}{\lambda} \Big(1+\frac{n^2}{\lambda}+\frac{n^2(n^2-1)}{3\lambda^2}+\cdots +\frac{n^2(n^2-1)\cdots(n^2-(n-1)^2)}{(2n-1)!!\lambda^{n}}\Big) \quad \Big(\text{by Lemma \ref{lem:polynomialineq1}}\Big) \\
		<& \frac{3.2||\psi||_{L_{\infty}([-1,1])}}{\lambda}. \quad \Big(\text{by $\lambda \geq 2n^2$}\Big)
	\end{align*}
	Moreover, when $\psi(\pm 1)=0, \psi'(\pm 1)=0$, 
	\begin{align*}
		\int_{-1}^{1}e^{i \lambda x} \psi(x)dx =  \frac{1}{i\lambda}\sum_{k=2}^n\frac{e^{i\lambda x}\psi^{(k)}(x)}{(-i\lambda)^{k}}\Big|_{-1}^1.
	\end{align*}
	Taking absolute value to both sides, we have 
	\begin{align*}
		&\Big|\int_{-1}^{1}e^{i \lambda x} \psi(x)dx\Big| \leq \frac{1}{\lambda} \Big(2\sum_{k=2}^n \frac{||\psi^{(k)}||_{L_{\infty}([-1,1])}}{\lambda^{k}}\Big)\\
		\leq& \frac{2||\psi||_{L_{\infty}([-1,1])}}{\lambda} \Big( \frac{n^2(n^2-1)}{3\lambda^2}+\cdots +\frac{n^2(n^2-1)\cdots(n^2-(n-1)^2)}{(2n-1)!!\lambda^{n}}\Big) \quad \Big(\text{by Lemma \ref{lem:polynomialineq1}}\Big) \\
		<& \frac{0.8n^4||\psi||_{L_{\infty}([-1,1])}}{\lambda^3}, \quad \Big(\text{by $\lambda \geq 2n^2$}\Big)
	\end{align*}
	whence the lemma follows.
\end{proof}

\subsection{A peak selection algorithm}
\begin{algorithm}[H]
	\caption{\textbf{Peak selection algorithm}}
	\textbf{Input:} Image $IMG = (f(\omega_1), \cdots, f(\omega_M))$\;
	\textbf{Input:} Peak compare range $PCR$, differential compare range $DCR$,  differential compare threshold $DCT$\;
	1: Initialize the Local maximum points $LMP = [\ ]$, peak points $PP = [\ ]$\;
	2: Differentiate the image $IMG$ to get the $DIMG = (f'(\omega_1), \cdots, f'(\omega_M))$\;
	3: \For{$j=1:M$}{
		\If{$f(\omega_j) = \max(f(\omega_{j-PCR}), f(\omega_{j-PCR+1}), \cdots, f(\omega_{j+PCR}))$}{
			$LMP$ appends $\omega_j$\;
	}}
	4: \For{$\omega_j$ in $LMP$}{
		\If{
			$\max(|f'(\omega_{j-DCR})|, |f'(\omega_{j-DCR+1})|, \cdots, |f'(\omega_{j+DCR})|)\geq DCT$
		}{$PP$ appends $\omega_j$\;}
		\textbf{Return:} $PP$.
	}
	\label{algo:peakselection}
\end{algorithm}

\bibliographystyle{plain}
\bibliography{references} 

\begin{thebibliography}{10}

\bibitem{batenkov2021spectral}
Dmitry Batenkov, Benedikt Diederichs, Gil Goldman, and Yosef Yomdin.
\newblock The spectral properties of vandermonde matrices with clustered nodes.
\newblock {\em Linear Algebra and its Applications}, 609:37--72, 2021.

\bibitem{batenkov2019super}
Dmitry Batenkov, Gil Goldman, and Yosef Yomdin.
\newblock {Super-resolution of near-colliding point sources}.
\newblock {\em Information and Inference: A Journal of the IMA}, 05 2020.
\newblock iaaa005.

\bibitem{bernstein2019deconvolution}
Brett Bernstein and Carlos Fernandez-Granda.
\newblock Deconvolution of point sources: a sampling theorem and robustness
  guarantees.
\newblock {\em Communications on Pure and Applied Mathematics},
  72(6):1152--1230, 2019.

\bibitem{cai2019fast}
Jian-Feng Cai, Tianming Wang, and Ke~Wei.
\newblock Fast and provable algorithms for spectrally sparse signal
  reconstruction via low-rank hankel matrix completion.
\newblock {\em Applied and Computational Harmonic Analysis}, 46(1):94--121,
  2019.

\bibitem{candes2013super}
Emmanuel~J. Cand{\`e}s and Carlos Fernandez-Granda.
\newblock Super-resolution from noisy data.
\newblock {\em Journal of Fourier Analysis and Applications}, 19(6):1229--1254,
  2013.

\bibitem{candes2014towards}
Emmanuel~J. Cand{\`e}s and Carlos Fernandez-Granda.
\newblock Towards a mathematical theory of super-resolution.
\newblock {\em Communications on Pure and Applied Mathematics}, 67(6):906--956,
  2014.

\bibitem{chi2020harnessing}
Yuejie Chi and Maxime~Ferreira Da~Costa.
\newblock Harnessing sparsity over the continuum: Atomic norm minimization for
  superresolution.
\newblock {\em IEEE Signal Processing Magazine}, 37(2):39--57, 2020.

\bibitem{denoyelle2017support}
Quentin Denoyelle, Vincent Duval, and Gabriel Peyr{\'e}.
\newblock Support recovery for sparse super-resolution of positive measures.
\newblock {\em Journal of Fourier Analysis and Applications}, 23(5):1153--1194,
  2017.

\bibitem{duval2015exact}
Vincent Duval and Gabriel Peyr{\'e}.
\newblock Exact support recovery for sparse spikes deconvolution.
\newblock {\em Foundations of Computational Mathematics}, 15(5):1315--1355,
  2015.

\bibitem{eldar2010block}
Yonina~C Eldar, Patrick Kuppinger, and Helmut Bolcskei.
\newblock Block-sparse signals: Uncertainty relations and efficient recovery.
\newblock {\em IEEE Transactions on Signal Processing}, 58(6):3042--3054, 2010.

\bibitem{eldar2009block}
Yonina~C Eldar and Moshe Mishali.
\newblock Block sparsity and sampling over a union of subspaces.
\newblock In {\em 2009 16th International Conference on Digital Signal
  Processing}, pages 1--8. IEEE, 2009.

\bibitem{eldar2009robust}
Yonina~C Eldar and Moshe Mishali.
\newblock Robust recovery of signals from a structured union of subspaces.
\newblock {\em IEEE Transactions on Information Theory}, 55(11):5302--5316,
  2009.

\bibitem{3cd521e14db44a81a63d9d80c4aaa1ca}
Carlos Fernandez-Granda.
\newblock Support detection in super-resolution.
\newblock In {\em Proceedings of the 10th International Conference on Sampling
  Theory and Applications (SampTA 2013)}, pages 145--148, 2013.

\bibitem{han2013improved}
Keyong Han and Arye Nehorai.
\newblock Improved source number detection and direction estimation with nested
  arrays and ulas using jackknifing.
\newblock {\em IEEE Transactions on Signal Processing}, 61(23):6118--6128,
  2013.

\bibitem{he2010detecting}
Zhaoshui He, Andrzej Cichocki, Shengli Xie, and Kyuwan Choi.
\newblock Detecting the number of clusters in n-way probabilistic clustering.
\newblock {\em IEEE Transactions on Pattern Analysis and Machine Intelligence},
  32(11):2006--2021, 2010.

\bibitem{hua1990matrix}
Yingbo Hua and Tapan~K. Sarkar.
\newblock Matrix pencil method for estimating parameters of exponentially
  damped/undamped sinusoids in noise.
\newblock {\em IEEE Transactions on Acoustics, Speech, and Signal Processing},
  38(5):814--824, 1990.

\bibitem{li2021stable}
Weilin Li and Wenjing Liao.
\newblock Stable super-resolution limit and smallest singular value of
  restricted fourier matrices.
\newblock {\em Applied and Computational Harmonic Analysis}, 51:118--156, 2021.

\bibitem{li2019super}
Weilin Li, Wenjing Liao, and Albert Fannjiang.
\newblock Super-resolution limit of the esprit algorithm.
\newblock {\em IEEE Transactions on Information Theory}, 66(7):4593--4608.

\bibitem{liao2016music}
Wenjing Liao and Albert~C. Fannjiang.
\newblock Music for single-snapshot spectral estimation: Stability and
  super-resolution.
\newblock {\em Applied and Computational Harmonic Analysis}, 40(1):33--67,
  2016.

\bibitem{liu2021mathematicalhighd}
Ping Liu and Hai Zhang.
\newblock A mathematical theory of computational resolution limit in
  multi-dimensional spaces.
\newblock {\em Inverse Problems}, 37(10):104001, 2021.

\bibitem{liu2021theorylse}
Ping Liu and Hai Zhang.
\newblock A theory of computational resolution limit for line spectral
  estimation.
\newblock {\em IEEE Transactions on Information Theory}, 67(7):4812--4827,
  2021.

\bibitem{liu2021mathematicaloned}
Ping Liu and Hai Zhang.
\newblock A mathematical theory of the computational resolution limit in one
  dimension.
\newblock {\em Applied and Computational Harmonic Analysis}, 56:402--446, 2022.

\bibitem{morgenshtern2020super}
Veniamin~I Morgenshtern.
\newblock Super-resolution of positive sources on an arbitrarily fine grid.
\newblock {\em arXiv preprint arXiv:2005.06756}, 2020.

\bibitem{morgenshtern2016super}
Veniamin~I. Morgenshtern and Emmanuel~J. Candes.
\newblock Super-resolution of positive sources: The discrete setup.
\newblock {\em SIAM Journal on Imaging Sciences}, 9(1):412--444, 2016.

\bibitem{roy1989esprit}
Richard Roy and Thomas Kailath.
\newblock Esprit-estimation of signal parameters via rotational invariance
  techniques.
\newblock {\em IEEE Transactions on acoustics, speech, and signal processing},
  37(7):984--995, 1989.

\bibitem{schmidt1986multiple}
Ralph Schmidt.
\newblock Multiple emitter location and signal parameter estimation.
\newblock {\em IEEE transactions on antennas and propagation}, 34(3):276--280,
  1986.

\bibitem{stojnic2009reconstruction}
Mihailo Stojnic, Farzad Parvaresh, and Babak Hassibi.
\newblock On the reconstruction of block-sparse signals with an optimal number
  of measurements.
\newblock {\em IEEE Transactions on Signal Processing}, 57(8):3075--3085, 2009.

\bibitem{tang2015resolution}
Gongguo Tang.
\newblock Resolution limits for atomic decompositions via markov-bernstein type
  inequalities.
\newblock In {\em 2015 International Conference on Sampling Theory and
  Applications (SampTA)}, pages 548--552. IEEE, 2015.

\bibitem{tang2014near}
Gongguo Tang, Badri~Narayan Bhaskar, and Benjamin Recht.
\newblock Near minimax line spectral estimation.
\newblock {\em IEEE Transactions on Information Theory}, 61(1):499--512, 2014.

\bibitem{wang2008tree}
Yung-Yi Wang, Liang-Cheng Lee, Shih-Jen Yang, and Jiunn-Tsair Chen.
\newblock A tree structure one-dimensional based algorithm for estimating the
  two-dimensional direction of arrivals and its performance analysis.
\newblock {\em IEEE transactions on antennas and propagation}, 56(1):178--188,
  2008.

\bibitem{wax1985detection}
Mati Wax and Thomas Kailath.
\newblock Detection of signals by information theoretic criteria.
\newblock {\em IEEE Transactions on acoustics, speech, and signal processing},
  33(2):387--392, 1985.

\end{thebibliography}
\end{document}